\documentclass[onecolumn, accepted=2017-06-26]{quantumarticle}

\usepackage[utf8]{inputenc}
\usepackage[english]{babel}
\usepackage[T1]{fontenc}

\usepackage{tikz}
\usepackage{lipsum}

\usepackage{enumerate}
\usepackage{amsmath}
\usepackage{amssymb}
\usepackage{graphicx}
\usepackage{color}
\usepackage{hhline}
\usepackage{tikz}
\usetikzlibrary{positioning}
\usetikzlibrary{backgrounds,fit,decorations.pathreplacing} 
\usepackage[hyperref,amsthm,amsmath,thmmarks]{ntheorem}
\usepackage{fullpage}
\allowdisplaybreaks[1]
\usepackage[numbers,sort&compress]{natbib}
\usepackage{hyperref}

\theorembodyfont{\rmfamily}
\newtheorem{theorem}{\textit{Theorem} }
\newtheorem{proposition}[theorem]{\textit{Proposition}}
\newtheorem{definition}[theorem]{\textit{Definition} }
\newtheorem{corollary}[theorem]{\textit{Corollary}}
\newtheorem{lemma}[theorem]{\textit{Lemma}}

\newcommand{\lv}{\left \vert}
\newcommand{\rv}{\right \vert}

\newcommand{\ra}{\right \rangle}
\newcommand{\ket}[1]{\lv #1 \ra}

\newcommand{\braket}[2]{\langle #1 \vert #2 \rangle}
\newcommand{\ketbra}[2]{\lv #1 \rangle \langle #2 \rv}

\newcommand{\tr}{\mathrm{Tr}}
\newcommand{\mc}[1]{\mathcal{#1}}

\title{Decoupling with random diagonal unitaries}

\author{Yoshifumi Nakata}
\affiliation{Institut f\"ur Theoretische Physik, Leibniz Universit\"at Hannover,  Appelstrasse 2, 30167 Hannover, Germany.}
\affiliation{Photon Science Center, Graduate School of Engineering, The University of Tokyo, Bunkyo-ku, Tokyo 113-8656, Japan.}
\affiliation{Departament de F\'{\i}sica: Grup d'Informaci\'{o} Qu\`{a}ntica, Universitat Aut\`{o}noma de Barcelona, ES-08193 Bellaterra (Barcelona), Spain.}
\email{nakata@qi.t.u-tokyo.ac.jp}

\author{Christoph Hirche}
\affiliation{Institut f\"ur Theoretische Physik, Leibniz Universit\"at Hannover,  Appelstrasse 2, 30167 Hannover, Germany.}
\affiliation{Departament de F\'{\i}sica: Grup d'Informaci\'{o} Qu\`{a}ntica, Universitat Aut\`{o}noma de Barcelona, ES-08193 Bellaterra (Barcelona), Spain.}
\email{christoph.hirche@uab.cat}

\author{Ciara Morgan}
\affiliation{Institut f\"ur Theoretische Physik, Leibniz Universit\"at Hannover,  Appelstrasse 2, 30167 Hannover, Germany.}
\affiliation{
School of Mathematics and Statistics, University College Dublin, Belfield, Dublin 4. Ireland.}
\email{ciara.morgan@ucd.ie}

\author{Andreas Winter}
\affiliation{Departament de F\'{\i}sica: Grup d'Informaci\'{o} Qu\`{a}ntica, Universitat Aut\`{o}noma de Barcelona, ES-08193 Bellaterra (Barcelona), Spain.}
\affiliation{
ICREA--Instituci\'{o} Catalana de Recerca i Estudis Avan\c{c}ats, Pg. Lluis Companys, 23, ES-08010 Barcelona, Spain}
\email{andreas.winter@uab.cat}
%nakata@qi.t.u-tokyo.ac.jp
%christoph.hirche@uab.cat
%ciara.morgan@itp.uni-hannover.de
%andreas.winter@uab.cat

\begin{document}

\maketitle

\begin{abstract}
We investigate decoupling, one of the most important primitives in quantum Shannon theory, by replacing the uniformly distributed random unitaries commonly used to achieve the protocol, with repeated applications of random unitaries diagonal in the Pauli-$Z$ and -$X$ bases. This strategy was recently shown to achieve an approximate unitary $2$-design after a number of repetitions of the process, which implies that the strategy gradually achieves decoupling. Here, we prove that even fewer repetitions of the process achieve decoupling at the same rate as that with the uniform ones, showing that rather imprecise approximations of unitary $2$-designs are sufficient for decoupling. We also briefly discuss efficient implementations of them and implications of our decoupling theorem to coherent state merging and relative thermalisation.
\end{abstract}

\section{Introduction}

The decoupling theorem states that the task of destroying all possible correlations between two quantum systems, can be achieved by applying a randomly chosen unitary to one of the systems, with error bounds expressed in terms of entropic correlation measures.  Versions of the theorem can be traced back to the origins of the field of quantum Shannon theory~\cite{SW2002,D2005,DW2004,GPW2005} and the original form is responsible not only for simplifying the derivation of known quantum protocols \cite{ADHW2009, DH2011}, offering critical insight into these procedures, but also for providing operational interpretations for entropic quantities and correlation measures.  
The decoupling theorem, or simply \emph{decoupling}, has been particularly used to achieve the state merging protocol~\cite{HOW07}, giving rise to an operational interpretation for the negativity of quantum conditional entropy that represents a fundamental deviation from the classical counterpart. 
Decoupling is also relevant to fundamental physics in complex quantum systems, such as the information paradox of quantum black holes~\cite{HP2007} and thermalisation in isolated quantum systems~\cite{dRHRW2014}.

Traditionally, the unitary applied during the protocol is chosen uniformly at random over the entire unitary group according to the unique unitarily invariant probability measure, also known as the Haar measure. These unitary operators are often referred to as Haar random unitaries. In Ref.~\cite{DBWR2010,DupuisThesis}, decoupling with Haar random unitaries was studied in its most general form, and it was revealed that whether this protocol can be achieved depends on a simple sum of the smooth entropy of the initial state and that of a state corresponding to the quantum channel involved in the protocol. 
Despite that this result has wide applications in quantum information science, Haar random unitaries cannot be efficiently implemented by quantum circuits. 
Here \emph{efficiently} means by using at most a polynomial number of gates and random bits. 
This fact raises the question of what kinds of efficiently implementable probability measures on the unitary group achieve decoupling as strong as the Haar measure.
One of the analyses goes through using so-called \emph{unitary $t$-designs}~\cite{DLT2002,DCEL2009,TGJ2007,GAE2007,BWV2008a,WBV2008,HL2009,DJ2011,L2010,BHH2012,CLLW2015,NHKW2016}, which simulate up to the $t$th order statistical moments of Haar random unitaries,
and it turned out that approximations of unitary $2$-designs can achieve decoupling if the approximation is precise enough~\cite{SDTR2013}.
However, it remains open whether such precise approximations are necessary or not.

In this paper, we propose a new method of achieving decoupling, based on the repeated application of random unitaries diagonal in the Pauli-$Z$ and -$X$ bases.
This process was first introduced in Ref.~\cite{NHMW2015-1} and has been shown to constitute an approximate unitary $2$-design, which was later proven to constitute $t$-designs for general $t$~\cite{NHKW2016}.
Together with the fact that an approximate 2-design can achieve decoupling~\cite{SDTR2013}, it immediately follows that the process achieves the decoupling as strong as Haar random ones after a certain number of repetitions.
However, we here go on and show that such strong decoupling is achievable after even fewer repetitions,
implying that rather imprecise approximations of unitary $2$-designs achieve decoupling at the same rate as the Haar one.
This result may open the possibility that decoupling could be achieved by random unitaries with less structure than unitary $2$-designs even if we require the same rate as the Haar random one.
We also prove the concentration of measure for this process and briefly discuss implications of our results to coherent state merging protocols~\cite{HOW07} and relative thermalisation~\cite{dRHRW2014}.

For an introduction to the decoupling theorem in the asymptotic setting we refer, in particular, to a tutorial by Hayden~\cite{HaydenTutorial} and in the non-asymptotic setting we refer to Ref.~\cite{DupuisThesis}. An introduction to the uses of random diagonal unitaries in quantum information science can be found in Ref.~\cite{NM2014}.
 
The article is organised as follows. In Section \ref{Prelims} we begin by introducing the necessary notation, where three known versions of the decoupling theorem are also presented for the sake of comparison. 
The main results are presented in Section~\ref{MainResult}. There, we also briefly discuss efficient implementations and applications.
Section~\ref{Proofs} contains the proofs of our results and we conclude with Section~\ref{Conclusion}. 

\section{Preliminaries} \label{Prelims}

We first introduce our notation, including the definitions of several entropies and random unitaries, in Subsection~\ref{SS:Notation}.
We then provide three known versions of decoupling theorem in Subsection~\ref{SS:D}.

\subsection{Notation} \label{SS:Notation}

We use the standard asymptotic notations, such as $O$, $\Omega$, and $\Theta$, 
and the standard norms for operators and superoperators, such as the Schatten $p$-norms and the diamond norm.
The definitions are listed in Appendix~\ref{App:ND}.

Let $A$ be a system composed of $N_A$ qubits. We denote by $\mc{H}_A$ the corresponding Hilbert space and by $d_A= 2^{N_A}$ the dimension of $\mc{H}_A$.
The set of bounded operators, positive semidefinite operators, subnormalised states and states on a Hilbert space $\mc{H}$ are denoted by
$\mc{B}(\mc{H})$, $\mc{P}(\mc{H})$, $\mc{S}_{\leq}(\mc{H}):=\{ \rho \in \mc{P}(\mc{H})| \tr \rho \leq 1 \}$, and $\mc{S}(\mc{H}):=\{ \rho \in \mc{P}(\mc{H})| \tr \rho = 1 \}$, respectively.
Reduced operators of $X_{AR} \in \mc{B}(\mc{H}_{A R})$ are denoted by $X_A:= \tr_R X_{AR}$ and $X_R := \tr_A X_{AR}$.
Throughout the paper $\rho^{-1}$ will be the Moore-Penrose generalised inverse of $\rho$. 
We use the Choi-Jamio\l kowski representation $J(\mc{T}_{A \rightarrow B})$ of a linear map $\mc{T}_{A \rightarrow B}:\mc{B}(\mc{H}_A) \rightarrow \mc{B}(\mc{H}_B)$, which is indeed an isomorphism (see Appendix~\ref{App:AL}).
In quantum mechanics, an important class of liner maps is a set of completely positive and trace preserving (CPTP) maps, often referred to as \emph{quantum channels}, of which Choi-Jamio\l kowski representations are sometimes called the Choi-Jamio\l kowski states.

We also use entropic quantities, namely the \emph{conditional collision entropy} and \emph{conditional min-entropy}.
\begin{definition}[Conditional collision entropy and conditional min-entropy~\cite{RennerThesis, T16}]
Let \newline $\rho_{AB} \in \mc{S}_{\leq}(\mc{H}_{AB})$. \emph{The conditional collision entropy} $H_2(A|B)_{\rho}$ and \emph{the conditional min-entropy} $H_{\rm min}(A|B)_{\rho}$ of $A$ given by $B$ are defined by, respectively,
\begin{align}
&H_2(A|B)_{\rho} = \sup_{\sigma_B \in \mc{S} (\mc{H}_B)} - \log \tr \bigl( (I_A \otimes \sigma_B^{-1/4}) \rho_{AB} (I_A \otimes \sigma_B^{-1/4}) \bigr)^2,\\
&H_{\rm min}(A|B)_{\rho} = \sup_{\sigma_B \in \mc{S}(\mc{H}_B)} \sup \{ \lambda \in \mathbb{R} : 2^{-\lambda} I_A \otimes \sigma_B - \rho_{AB} \geq 0 \},
\end{align}
where $I_A$ is the identity operator on $\mc{H}_A$.
\end{definition}
%These entropies satisfy
%\begin{equation}
%- \min\{\log d_A, \log d_B\} \leq H_{\rm min}(A|B)_{\rho} \leq H_2(A|B)_{\rho}  \leq \log d_A,\label{Eq:entropies}
%\end{equation}
%for $\rho_{AB} \in \mc{S}(\mc{H}_{AB})$~\cite{T16}

We also use a smoothed version of the conditional min-entropy.
Let $P(\rho, \sigma)$ be the purified distance for $\rho, \sigma \in S_{\leq}(\mc{H})$, given by $P(\rho, \sigma) = \sqrt{1 - \bar{F}(\rho, \sigma)^2}$ where $\bar{F}(\rho, \sigma) :=  |\!| \sqrt{\rho} \sqrt{\sigma}  |\!|_1 -\sqrt{(1- \tr \rho)(1- \tr \sigma)}$.

\begin{definition}[{\bf Smooth conditional min-entropy~\cite{RennerThesis}}]
Let $\epsilon \geq 0$ and $\rho_{AB} \in S_{\leq}(\mc{H}_{AB})$. The \emph{$\epsilon$-smooth conditional min-entropy} of $A$ given $B$ is defined by
\begin{equation}
H^{\epsilon}_{\rm min}(A|B)_{\rho} = \sup_{\substack{ \sigma \in S_{\leq}(\mc{H}_{AB}), \\ P(\rho_{AB}, \sigma) \leq \epsilon}} H_{\rm min}(A|B)_{\sigma}.
\end{equation}
\end{definition}

Random unitaries play crucial roles in decoupling. The definitions are given below.

\begin{definition}[Haar random unitaries]
{\it Let $\mc{U}$ be the unitary group, and denote the Haar measure (i.e. the unique unitarily invariant probability measure) 
 on $\mc{U}$ by ${\sf H}$. A \emph{Haar random unitary}
 $U$ is a $\mc{U}$-valued random variable distributed according to the Haar measure, $U \sim {\sf H}$.}
\end{definition}

\begin{definition}[{\bf Random $X$- and $Z$-diagonal unitaries~\cite{NM2013}}]
{\it Let $\mathcal{D}_{W}(d)$ be the set of unitaries diagonal in the Pauli-$W$ basis $\{ \ket{n}_W \}_{n=0}^{d-1}$ ($W=X,Z$), given by $\bigl\{\sum_{n=0}^{d-1} e^{i \varphi_n} \ketbra{n}{n}_W :\varphi_n \in [0, 2\pi)  \text{  for  }  n \in [0,\ldots,d-1]  \bigr\}$. 
Let ${\sf D}_W$ denote a probability measure on it induced by a uniform probability measure on its parameter space $[0,2 \pi)^d$. A \emph{random $W$-diagonal-unitary} $D$ is a $\mathcal{D}_{W}(d)$-valued random variable distributed according to ${\sf D}_W$, $D \sim {\sf D}_W$.}
\end{definition}

These random unitaries have been applied to a wide variety of problems in quantum information science (see e.g.~\cite{L2010} and~\cite{NM2014}).
%~\cite{BR2003,AMTW2000,AS2004,DN2006,HLSW2004,Au2009,L2000,HHL2004,DLT2002,TDL2001,L1997,RBSC2004,RRS2005,S2006,DCEL2009}
%and have been used to investigate typical properties in physical systems~\cite{GLTZ2006,PSW2006,R2008,HP2007,dRHRW2014,NTM2012}.
However, implementations of Haar random unitaries on $N$ qubits necessarily use $\Omega(2^{2N})$ quantum gates and at least the same amount of random bits~\cite{K1995}, and so, cannot be efficient.
This fact has led to the investigations of {\it unitary $t$-designs}~\cite{DLT2002,DCEL2009,TGJ2007,GAE2007,BWV2008a,WBV2008,HL2009,DJ2011,L2010,BHH2012,CLLW2015}. 
Let $t$ be a natural number and $\mathcal{G}_{U \sim \nu}^{(t)}(X)$ be a CPTP map given by $\mathcal{G}_{U\sim \nu}^{(t)}(X) := \mathbb{E}_{U \sim \nu} [ U^{\otimes t} X U^{\dagger \otimes t}]$ for any $X \in \mc{B}(\mc{H}^{\otimes t})$, where $\mathbb{E}_{U \sim \nu}$ represents an average over a probability measure $\nu$.
A $\delta$-approximate unitary $t$-design is then defined as follows.
\begin{definition}[Unitary $\boldsymbol{t}$-designs~\cite{DCEL2009,HL2009}] \label{Def:Ut}
Let $\nu$ be a probability measure on the unitary group.
A random unitary $U \sim \nu$ is called a \emph{$\delta$-approximate unitary $t$-design} 
 if $|\!|  \mathcal{G}^{(t)}_{U \sim \nu} - \mathcal{G}^{(t)}_{U \sim {\sf H}}  |\! |_{\diamond} \leq \delta$.
The design is called \emph{exact} when $\delta=0$.
\end{definition}
%Since the Haar measure is a uniform distribution in the unitary group, the $t$ for a unitary $t$-design is an indicator of how uniform the distribution is.
%Recalling that the diamond norm provides the optimal success probability of distinguishing two quantum channels, a $\delta$-approximate unitary $t$-design cannot be distinguished up to error $\delta$ from a Haar random one even if we have $t$ copies of the unitary.

Finally, we introduce a probability measure related to a random quantum circuit (RQC).
An RQC, first introduced in Refs.~\cite{ODP2007,DOP2007}, is a quantum circuit where each local gate is chosen independently and uniformly at random from a given gate set and is applied to a randomly chosen pair of qubits. 
In this paper, we denote the RQCs, where the gate set is chosen to be a set of all two-qubit gates, with length $O(L)$ by $RQC(L)$.
The $RQC(L)$ is naturally equipped with a probability measure, induced from the random choices of gates and qubits on which the gates are applied. We denote the measure by ${\sf RQC}(L)$.

\subsection{Generalised decoupling: Setting and previous results} \label{SS:D}

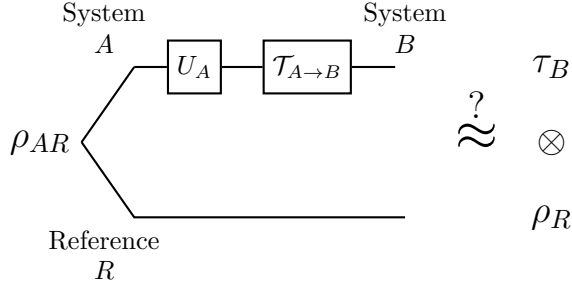
\begin{figure}[tb!]
\centering
   \begin{tikzpicture}[thick]
    %
    % `operator' will only be used by Hadamard (H) gates here.
    % `phase' is used for controlled phase gates (dots).
    % `surround' is used for the background box.
    \tikzstyle{operator1} = [draw,fill=white,minimum size=1.5em, minimum width=3em] 
    \tikzstyle{operator} = [draw,fill=white,minimum size=2em] 
    \tikzstyle{phase} = [fill,shape=circle,minimum size=5pt,inner sep=0pt]
    %\tikzstyle{surround} = [fill=blue!10,thick,draw=black,rounded corners=2mm]
    %
    \node at (0,0) (origin) {};
    %
    % Column 1
    \path[draw] (0,0) -- (0.7,1) (0,0) -- (0.7,-1) -- (4.3,-1);
	\node[operator] (U) at (1.5,1) {$U_A$} edge [-] (0.7,1);
	\node[operator] (T) at (3,1) {$\mc{T}_{A \rightarrow B}$} edge [-] (U);
	\node at  (4.3,1) {}  edge [-] (T);
	\node at (0.3,1.7) {System};
	\node at (0.3,1.3) {$A$};
	\node at (0.3,-1.3) {Reference};
	\node at (0.3,-1.7) {$R$};
	\node at (-0.55,0) {\Large $\rho_{AR}$};
	\node at (4.3,1.7) {System};
	\node at (4.3,1.3) {$B$};

%	\node at (5.2,0) {\huge  $ = $};
%
%    \path[draw] (7,0) -- (7.7,1) (7,0) -- (7.7,-1);
%	\node[operator] (U) at (8.5,1) {$U_A$} edge [-] (7.7,1);
%	\node[operator] (T) at (10,1) {$\mc{T}_{A \rightarrow B}$} edge [-] (U);
%	\node[operator] (E) at (9.5,-1) {$\mc{E}_{\bar{A} \rightarrow R}$} edge [-] (7.7,-1);
%	\node at  (11.3,1) {}  edge [-] (T);
%	\node at  (11.3,-1) {}  edge [-] (E);
%	\node at (7.3,1.7) {System};
%	\node at (7.3,1.3) {$A$};
%	\node at (7.3,-1.3) {System};
%	\node at (7.3,-1.7) {$\bar{A}$};
%	\node at (6.45,0) {\Large $\Phi_{A \bar{A}}$};
%	\node at (11.3,1.7) {System};
%	\node at (11.3,1.3) {$B$};
%	\node at (11.3,-1.3) {Reference};
%	\node at (11.3,-1.7) {$R$};

	\node at (5.25,0.3) {\huge  $ \stackrel{?}{\approx} $};
	\node at (6.25,1) {\Large  $\tau_{B}$};
	\node at (6.25,0) {\Large  $\otimes$};
	\node at (6.25,-1) {\Large  $\rho_R$};
%    \draw[-] (op21) at (1,-1) {$\varphi_2$} edge [-] (origin);
    \end{tikzpicture}
\caption{This figure depicts the decoupling protocol in the most general form. The left diagram shows the protocol starting with an initial state $\rho_{AR}$. The goal is to make the output state $\mc{T}_{A \rightarrow B}(U_A \rho_{AR} U_A^{\dagger})$ as close as possible to $\tau_B \otimes \rho_{R}$ by choosing an appropriate unitary $U_A$.}
\label{Fig:dec0}
\end{figure}

%To explain the decoupling protocol in the most general form,
Let us consider the following bipartite setting with a system $A$ and a reference $R$.
Initially, they share a given initial state $\rho_{AR} \in \mathcal{S}(\mathcal{H}_{AR})$.
We apply a unitary $U_A$ on the system $A$, and the state on $A$ is sent to $B$ via a given quantum channel $\mathcal{T}_{A\rightarrow B}$.
The task of decoupling is to make the resulting state $\mc{T}_{A \rightarrow B}  (U_A \rho_{AR} U_A^{\dagger})$ as close to a product state as possible by selecting an appropriate unitary $U_A$ (see also Fig.~\ref{Fig:dec0}).
In particular, when the resulting state is approximately $\tau_B \otimes \rho_R$, where $\tau_B = \tr_{A} J(\mc{T}_{A \rightarrow B})$ and $\rho_R= \tr_A \rho_{AR}$,
it is often said that {\it generalised decoupling} is achieved~\cite{DBWR2010,DupuisThesis}. We consider this generalised version in this paper and simply refer to it as decoupling.
%It is obvious that decoupling is not always achievable. For instance, when the initial state $\rho_{AR}$ is a maximally entangled state and the CPTP map is given by the identity map, the output state is also maximally entangled between $B$ and $R$, implying that decoupling cannot be achieved.
%Hence, the main goal in the study of decoupling is to answer the questions of when and, if it is achievable, how decoupling can be achieved.

In this paper, for a given probability measure $\nu$ on the unitary group,
we say decoupling is achievable with an error $\Delta$ when
\begin{equation}
\mathbb{E}_{U_A \sim \nu}|\!| \mc{T}_{A \rightarrow B}(U_A \rho_{AR} U_A^{\dagger}) - \tau_B \otimes  \rho_R |\!|_1 \leq \Delta. \label{Eq:kn42}
\end{equation} 
At this point we can draw an analogy to the asymptotic setting of many independent and identically distributed (i.i.d.) states where we would define the decaying rate of the error as $\lim_{n\rightarrow\infty}\frac{-1}{n}\log\Delta$. For our purposes, if $\Delta = O(2^{-\Lambda})$, in a slight abuse of nomenclature, we call $\Lambda$ the \emph{decoupling rate} with the random unitary $U_A$. This analogy is justified in the sense that all errors $\Delta$ leading to the same decoupling rate $\Lambda$ give the same asymptotic rate when applying the protocol to many i.i.d. states.   
Since the left-hand side of Eq.~\eqref{Eq:kn42} is non-negative, when the error $\Delta$ is sufficiently small or the decoupling rate $\Lambda$ is sufficiently large,
almost all unitaries according to the measure $\nu$ approximately achieve decoupling.

In the literature, three generalised versions of decoupling are known, which we summarise in the following theorem.
\begin{theorem}[Decoupling theorems~\cite{DBWR2010,SDTR2013,BF2013}] \label{Thm:decs}
Let ${\sf H}$, and ${\sf Des^{(2,{\delta})}}$ be the Haar measure and a $\delta$-approximate unitary $2$-design, respectively. 
Then, for $\rho_{AR} \in \mc{S}(\mc{H}_{AR})$, a CPTP map $\mc{T}_{A \rightarrow B}:\mc{B}(\mc{H}_A) \rightarrow \mc{B}(\mc{H}_B)$,
and $\epsilon > 0$,
\begin{align}
& \mathbb{E}_{U_A \sim {\sf H}}|\!| \mc{T}_{A \rightarrow B}(U_A \rho_{AR} U_A^{\dagger}) - \tau_B \otimes  \rho_R |\!|_1 \leq 2^{-\frac{1}{2} (H^{\epsilon}_{\rm min}(A|R)_{\rho}+H^{\epsilon}_{\rm min}(A|B)_{\tau})}+12 \epsilon, \label{Eq:decHaar}\\
&\mathbb{E}_{U_A \sim {\sf Des^{(2,{\delta})}}}|\!| \mc{T}_{A \rightarrow B}(U_A \rho_{AR} U_A^{\dagger}) - \tau_B \otimes  \rho_R |\!|_1
\leq 
\sqrt{1+4 \delta d_A^4}\  2^{-\frac{1}{2} (H^{\epsilon}_{\rm min}(A|R)_{\rho}+H^{\epsilon}_{\rm min}(A|B)_{\tau} )} + 8 d_A \delta \epsilon + 12 \epsilon. \label{Eq:dec2des}
\end{align}
Moreover, for any $\eta>0$, an RQC($L$) satisfies
\begin{align}
&\mathbb{E}_{U_A \sim {\sf RQC}(L)}|\!| \mc{T}_{A \rightarrow B}(U_A \rho_{AR} U_A^{\dagger}) - \tau_B \otimes  \rho_R |\!|_1
\leq 
\sqrt{\frac{1}{{\rm poly}(N_A)} +  2^{2\eta N_A- H_{2}(A|R)_{\rho}-H_{2}(A|B)_{\tau}}}, \label{Eq:decRQC}
\end{align}
for all $L \geq c N_A (\log N_A)^2$, where $c$ is a constant.
\end{theorem}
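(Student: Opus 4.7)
The plan is to treat the three bounds in Theorem~\ref{Thm:decs} in a unified way, since all three are proved via the same two-norm technique going back to~\cite{DBWR2010}, and then to feed the specific second-moment estimates for Haar, $2$-designs, and RQCs into the same final step.

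First I would reduce the trace-norm quantity on the left of~\eqref{Eq:decHaar}--\eqref{Eq:decRQC} to a weighted Hilbert--Schmidt norm. Concretely, for any $\sigma_B\in\mc{S}(\mc{H}_B)$ and $\sigma_R\in\mc{S}(\mc{H}_R)$, the standard Hölder-type inequality gives
\begin{equation}
\|\omega_{BR}-\tau_B\otimes\rho_R\|_1
\leq
\bigl\|(\sigma_B^{-1/4}\otimes\sigma_R^{-1/4})(\omega_{BR}-\tau_B\otimes\rho_R)(\sigma_B^{-1/4}\otimes\sigma_R^{-1/4})\bigr\|_2,
\end{equation}
so that by Jensen it suffices to estimate the expected squared $2$-norm of the tilted difference. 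Expanding the square and using linearity of the expectation produces an expression involving the second-order twirl $\mathcal{G}^{(2)}_{U\sim\nu}$ applied to $\rho_{AR}\otimes\rho_{AR}$, pulled back through $\mc{T}_{A\to B}$ via its Choi--Jamio{\l}kowski operator. This is the step that makes the whole problem moment-theoretic.

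Second I would evaluate or bound this second moment. For $\nu={\sf H}$ one uses the exact Schur--Weyl formula $\mathbb{E}_U[U^{\otimes 2}X U^{\dagger\otimes 2}]=\alpha I+\beta F$, with $F$ the swap; after optimising the free parameters $\sigma_B,\sigma_R$ over states, the resulting trace quantities are exactly $2^{-H_2(A|R)_{\rho}}$ and $2^{-H_2(A|B)_{\tau}}$ for the Choi state of $\mc{T}_{A\to B}$. A smoothing argument--replacing $\rho_{AR}$ by a nearby subnormalised $\tilde\rho_{AR}$ with $P(\rho,\tilde\rho)\leq\epsilon$ that attains the $\epsilon$-smooth min-entropy, and doing the same on the channel side--upgrades $H_2$ to $H^\epsilon_{\min}$ at the cost of the additive $12\epsilon$. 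For $\nu={\sf Des}^{(2,\delta)}$ the same calculation applies, but the twirl differs from the Haar twirl by at most $\delta$ in diamond norm; this perturbation enters multiplicatively as $\sqrt{1+4\delta d_A^4}$ in front of the entropic term, and additively as $8 d_A\delta\epsilon$ once combined with the smoothing, because the worst-case operators appearing in the tilt have norms scaling as $d_A^2$. For $\nu={\sf RQC}(L)$, I would invoke the Brandão--Harrow--Horodecki result that after $L\geq c N_A(\log N_A)^2$ gates the RQC is a $\delta$-approximate $2$-design with $\delta=1/{\rm poly}(N_A)$ for an appropriately chosen constant $\eta$; substituting this $\delta$ into the $2$-design bound, using $H_2\leq H^\epsilon_{\min}$ and absorbing the $\eta N_A$ slack, yields~\eqref{Eq:decRQC}.

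The main obstacle is the bookkeeping in the $2$-design step: the Haar twirl produces two terms (identity and swap) whose coefficients are $O(1/d_A^2)$, so the $\delta$-error of the twirl must be compared against quantities that can themselves be as large as $d_A^2$, and it is only after carefully pairing the error with the tilted operators $\sigma_B,\sigma_R$ and the Choi operator of $\mc{T}_{A\to B}$ that the clean factor $\sqrt{1+4\delta d_A^4}$ emerges rather than a crude $1+O(\delta d_A^8)$. The smoothing is technically routine but has to be performed on both $\rho_{AR}$ and $J(\mc{T}_{A\to B})$ simultaneously, which is why the two smooth min-entropies appear symmetrically in the exponent.
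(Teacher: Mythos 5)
The paper does not prove Theorem~\ref{Thm:decs}; it is stated as a summary of prior results with citations to \cite{DBWR2010,SDTR2013,BF2013}, and the paper's own technical work is in Theorems~\ref{Thm:decDU} and~\ref{Thm:CoMtorus}. Your sketch for the Haar bound~\eqref{Eq:decHaar} and the $2$-design bound~\eqref{Eq:dec2des} does track the standard route of \cite{DBWR2010,SDTR2013}, which is also the skeleton the paper itself reuses for Theorem~\ref{Thm:decDU}: pass to a tilted $2$-norm, apply Jensen, reduce to the second-moment twirl pulled back through the Choi operators of $\mc{T}_{A\to B}$ and of the CP map $\mc{E}_{\bar A\to R}$ with $J(\mc{E}_{\bar A\to R})=\rho_{AR}$, then express the result via $\tr\tilde\tau_{AB}^2$ and $\tr\tilde\rho_{AR}^2$ and smooth. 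You do not actually derive the constants ($12\epsilon$, $8d_A\delta\epsilon$, the $d_A^4$ in $\sqrt{1+4\delta d_A^4}$) and you gloss over the fact that the twirl is applied to the rank-one $\Phi_{A\bar A}^{\otimes 2}$ (or $\xi_{A\bar A}^{\otimes 2}$) rather than directly to $\rho_{AR}\otimes\rho_{AR}$, but these are presentational rather than conceptual gaps.

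However, your treatment of the RQC bound~\eqref{Eq:decRQC} does not work. You propose to obtain it by invoking the Brand\~ao--Harrow--Horodecki $2$-design result and then substituting the resulting $\delta$ into~\eqref{Eq:dec2des}. This cannot yield~\eqref{Eq:decRQC}: first, that bound has a qualitatively different form, with an additive $1/{\rm poly}(N_A)$ term sitting \emph{inside} the square root, a $2^{2\eta N_A}$ slack factor in the exponent, and collision entropies $H_2$ rather than smooth min-entropies --- none of which appear when one feeds a numerical $\delta$ into~\eqref{Eq:dec2des}. Second, the circuit length $L\geq c N_A(\log N_A)^2$ is tighter than what follows from inserting the approximate-$2$-design guarantee of \cite{BHH2012} (which gives $O(N_A^2)$ for constant relative error) into the $2$-design decoupling bound. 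The result~\eqref{Eq:decRQC} is due to Brown and Fawzi \cite{BF2013}, who prove it by a dedicated analysis of the second-moment operator of RQCs (essentially tracking the spectral gap and martingale-type concentration directly), not by going through the generic $\delta$-approximate $2$-design statement. So for~\eqref{Eq:decRQC} you need a genuinely different argument, not a substitution into~\eqref{Eq:dec2des}.
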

A decoupling rate with a Haar random unitary is given from Ineq.~\eqref{Eq:decHaar}: for sufficiently small $\epsilon$, 
\begin{equation}
\Lambda_{\rm Haar}=\frac{1}{2} (H^{\epsilon}_{\rm min}(A|R)_{\rho}+H^{\epsilon}_{\rm min}(A|B)_{\tau}).
\end{equation}
This rate can be also achieved by $\delta$-approximate unitary $2$-designs when $\delta=O(d_A^{-4})$, which can be observed from Eq.~\eqref{Eq:dec2des}. 
Due to the asymptotic equipartition property of the smooth conditional min-entropy~\cite{TCR2009}, $\Lambda_{\rm Haar}$ can be easily transformed to the standard rate in the asymptotic limit of the i.i.d. setting, given by the sum of the conditional von Neumann entropies.
The inequality~\eqref{Eq:decRQC} shows that $RQC\bigl(N_A(\log N_A)^2 \bigr)$ achieves decoupling at the expense of exponentially small decoupling errors.

We note that no random unitaries are known so far, which are worse approximations than $O(d_A^{-4})$ of unitary $2$-designs but still achieve decoupling at the same rate as $\Lambda_{\rm Haar}$. 
This motivates us to explore imprecise approximations of unitary $2$-designs achieving decoupling at the rate $\Lambda_{\rm Haar}$.

\section{Main results} \label{MainResult}

Here, we show that decoupling at the rate $\Lambda_{\rm Haar}$ can be achieved by $\Theta(d_A^{-2})$-approximate unitary $2$-designs.
We then briefly discuss efficient implementations, and implications to coherent state merging~\cite{HOW07} and relative thermalisation~\cite{dRHRW2014}.\\

Our strategy is to use independent random $Z$- and $X$-diagonal unitaries. 
As it is obvious that one use of a random $Z$- or $X$-diagonal unitary cannot achieve decoupling, we need to consider combinations of them, such as alternate applications of random $Z$- and $X$-diagonal unitaries.
It turns out however that one repetition of them does not achieve decoupling in general, which is presented in Proposition~\ref{Prop:example} below (see Appendix~\ref{Ex:Proof} for the proof).

\begin{proposition} \label{Prop:example}
Consider a system $A$ composed of two subsystems $A_1$ and $A_2$ with dimension $d_{A_1}$ and $d_{A_2}$, respectively. 
Let $\rho_{AR}$ be $\rho_{AR}=\Phi_{A_1 R} \otimes \ketbra{0}{0}_{A_2}$, where $\Phi_{A_1 R}$ is the maximally entangled state between $A_1$ and $R$, and $\ket{0}$ is in the Pauli-$Z$ basis.
Then, for a random unitary $U_A = D^X D^Z$ ($D^X \sim {\sf D}_X$ and $D^Z \sim {\sf D}_Z$), we have
\begin{equation}
\mathbb{E}_{U_A \sim {\sf D}_X \times {\sf D}_Z}|\!|  \tr_{A_2} (U_A  \rho_{AR}  U_A^{\dagger} )- \tr_{A}J(\tr_{A_2}) \otimes  \rho_R |\!|_1 
\geq 
\frac{1}{\sqrt{2}}
\biggl(\frac{1}{d_{A_1}} + \frac{1}{d_{A_2}} - \frac{1}{d_{A}} - \frac{1}{d_{A_1}^2} \biggr). \label{Eq:2ni4o3gsd;} 
\end{equation}
On the other hand, in the same setting, Haar random unitaries satisfy
\begin{equation}
\mathbb{E}_{U_A \sim {\sf H}}|\!|  \tr_{A_2} (U_A  \rho_{AR}  U_A^{\dagger} )- \tr_{A}J(\tr_{A_2}) \otimes  \rho_R |\!|_1 
\leq 
\frac{d_{A_1}}{\sqrt{d_{A_2}}},
\end{equation}
where the upper bound is much smaller than the right-hand side of Eq.~\eqref{Eq:2ni4o3gsd;} when $d_{A_1} = O(d_A^{1/3})$.
\end{proposition}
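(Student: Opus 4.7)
The plan is to lower-bound $\mathbb{E}\|A\|_1$, where $A := \tr_{A_2}(U_A\rho_{AR}U_A^\dagger) - \tau_B\otimes\rho_R$, by combining the pointwise Cauchy--Schwarz inequality $\|A\|_1 \geq \|A\|_2^2/\|A\|_\infty$ with a deterministic bound on $\|A\|_\infty$ and an explicit computation of $\mathbb{E}\|A\|_2^2$. Since $\tau_B = I_{A_1}/d_{A_1}$ and $\rho_R = I_R/d_{A_1}$, the triangle inequality gives $\|A\|_\infty \leq \|\sigma\|_\infty + \|\tau_B\otimes\rho_R\|_\infty \leq 1 + 1/d_{A_1}^2 \leq \sqrt{2}$ for all $d_{A_1}\geq 2$, so $\mathbb{E}\|A\|_1 \geq \mathbb{E}\|A\|_2^2/\sqrt{2}$. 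A direct calculation further confirms $\mathbb{E}\sigma = \tau_B\otimes\rho_R$ (averaging $D^X$ dephases in the $X$-basis of $A$, and the remaining average over $D^Z$ kills the off-diagonal terms on $R$), so the cross-term cancels and $\mathbb{E}\|A\|_2^2 = \mathbb{E}\tr\sigma^2 - 1/d_{A_1}^2$.

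For $\mathbb{E}\tr\sigma^2$, the key observation is that $D^X D^Z(|\Phi\rangle_{A_1R}|0\rangle_{A_2})$ is pure on $A_1A_2R$, whence by Schmidt decomposition $\tr\sigma_{BR}^2 = \tr\sigma_{A_2}^2$, and it suffices to compute the single-system purity. Writing $D^Z = \sum_{a,b}e^{i\alpha_{ab}}|a,b\rangle\langle a,b|$ (in the $Z$-basis) and $D^X = \sum_{c,d}e^{i\gamma_{cd}}|c,d\rangle\langle c,d|$ (in the $X$-basis), expanding $|0\rangle_{A_2} = d_{A_2}^{-1/2}\sum_d|d\rangle_X$, propagating through $D^X$, and tracing out $A_1 R$ yields
\begin{equation*}
\sigma_{A_2} \;=\; \frac{1}{d_{A_1}d_{A_2}}\sum_{c,d,d'} e^{i(\gamma_{cd}-\gamma_{cd'})}\,|d\rangle\langle d'|_X^{A_2},
\end{equation*}
in which, notably, the phases $\beta_i := \alpha_{i,0}$ of $D^Z$ drop out. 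Squaring and tracing gives $\tr\sigma_{A_2}^2 = (d_{A_1}^2 d_{A_2}^2)^{-1}\sum_{c,c',d,d'}e^{i(\gamma_{cd}-\gamma_{cd'}+\gamma_{c'd'}-\gamma_{c'd})}$; averaging over the iid uniform $\gamma$'s leaves only tuples with $c=c'$ or $d=d'$, and a short inclusion--exclusion count yields $\mathbb{E}\tr\sigma_{A_2}^2 = (d_{A_1}d_{A_2}^2 + d_{A_1}^2 d_{A_2} - d_{A_1}d_{A_2})/(d_{A_1}^2 d_{A_2}^2) = 1/d_{A_1}+1/d_{A_2}-1/d_A$. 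Subtracting $1/d_{A_1}^2$ and dividing by $\sqrt{2}$ produces the claimed lower bound.

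The Haar upper bound is a direct application of Theorem~\ref{Thm:decs} Eq.~\eqref{Eq:decHaar}: here $H_{\min}(A|R)_\rho = -\log d_{A_1}$ (from the maximally entangled factor), while the Choi state of the partial trace $\tr_{A_2}$ factorises explicitly as $|\Phi\rangle\langle\Phi|_{BA_1'}\otimes I_{A_2'}/d_{A_2}$, yielding $H_{\min}(A|B)_\tau = \log(d_{A_2}/d_{A_1})$; summing, halving and exponentiating (with $\epsilon\to 0$) then gives $d_{A_1}/\sqrt{d_{A_2}}$. I expect the main technical obstacle to be the fourth-moment inclusion--exclusion and the associated change-of-basis bookkeeping in the $\sigma_{A_2}$ calculation; all remaining steps are elementary operator-norm manipulations.
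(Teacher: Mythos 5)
Your proof is correct, and the overall skeleton matches the paper's: lower-bound the $1$-norm by $\tr A^2/\sqrt{2}$ pointwise, observe $\mathbb{E}\sigma=\tau_B\otimes\rho_R$ so the cross term cancels to leave $\mathbb{E}\tr A^2=\mathbb{E}\tr\sigma^2-1/d_{A_1}^2$, compute $\mathbb{E}\tr\sigma^2=1/d_{A_1}+1/d_{A_2}-1/d_A$, and read off the Haar bound from Theorem~\ref{Thm:decs} with $\epsilon\to 0$. Where you diverge is in the two technical steps, and in both cases your route is arguably cleaner. For the pointwise norm inequality, the paper goes $\|A\|_1\geq\|A\|_2=\sqrt{\tr A^2}\geq\tr A^2/\sqrt{2}$ using $\tr A^2\leq 2$, whereas you use $\|A\|_1\geq\tr A^2/\|A\|_\infty$ together with $\|A\|_\infty\leq 1+1/d_{A_1}^2\leq\sqrt{2}$ (valid for $d_{A_1}\geq2$; for $d_{A_1}=1$ the claimed right-hand side is $0$ and the bound is trivial, so nothing is lost). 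Both land on the same $1/\sqrt{2}$. More substantively, for $\mathbb{E}\tr\sigma^2$ the paper computes the full two-copy operator $\mathbb{E}[(U_A\rho_{AR}U_A^\dagger)^{\otimes 2}]$ on $\mc{B}(\mc{H}_A^{\otimes 2})$, expands it into a multi-term expression in $\mathbb{F}$ and $\mathbb{L}^{X/Z}$ projectors, and contracts with $\mathbb{F}_{A_1A_1'}\otimes\mathbb{F}_{RR'}$ via the swap trick; you instead exploit the purity of the global state to reduce the calculation to $\mathbb{E}\tr\sigma_{A_2}^2$, write $\sigma_{A_2}$ explicitly as a small $d_{A_2}\times d_{A_2}$ matrix, and evaluate a single four-phase average by inclusion--exclusion. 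Your route is shorter, makes the disappearance of the $D^Z$ phases (the structural reason one $D^XD^Z$ round fails to decouple) explicit, and avoids the unwieldy intermediate operator. The Haar part is the same as the paper's, just with the min-entropy bookkeeping written out.
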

Proposition~\ref{Prop:example} provides an explicit example where one repetition of random $Z$- and $X$-diagonal unitaries do not generally achieve decoupling with the same error as the Haar random one.
Hence, we consider a number $\ell$ of repetitions, giving rise to a random unitary
\begin{equation}
D[\ell] := D_{\ell+1}^Z D_{\ell}^X D_{\ell}^Z \cdots D_2^X D_2^Z D_1^X D_1^Z,
\end{equation}
where $D_i^W$ are independent $W$-diagonal unitaries ($i=1,\ldots,\ell+1$, $W=X,Z$).
Note that the random unitary $D[\ell]$ starts and ends with random $Z$-diagonal unitaries, which is just for simplifying the analysis.
We denote by $\mc{D}[\ell]$ the set of unitaries on which $D[\ell]$ takes its value, and by ${\sf D}[\ell]$ the probability measure of $D[\ell]$ induced by those of $D_i^W$.

In Ref.~\cite{NHMW2015-1}, $D[\ell]$ was shown to form an approximate unitary $2$-design if $\ell$ is sufficiently large.
\begin{theorem}[Ref.~\cite{NHMW2015-1}] \label{Thm:old}
The random unitary $D[\ell]$ on $N$ qubits is a $\delta$-approximate unitary $2$-design where
\begin{equation}
\frac{2}{2^{\ell N}} \biggl[ 1 - \frac{1}{2^N-1} \biggr]\leq \delta \leq \frac{2}{2^{\ell N}} \biggl[ 1 + \frac{2}{2^N-1} \biggr].
\end{equation}
\end{theorem}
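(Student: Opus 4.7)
The plan is to bound the diamond-norm distance $\|\mathcal{G}^{(2)}_{D[\ell]} - \mathcal{G}^{(2)}_{\sf H}\|_{\diamond}$ by exploiting that each averaged two-copy channel is an orthogonal projector. Because the $W$-diagonal unitaries form a compact abelian subgroup and ${\sf D}_W$ is its Haar measure, the superoperators $\mathcal{P}_W := \mathcal{G}^{(2)}_{{\sf D}_W}$ for $W \in \{X,Z\}$ and $\mathcal{P}_H := \mathcal{G}^{(2)}_{\sf H}$ are self-adjoint idempotents in the Hilbert--Schmidt inner product, whose ranges are the commutants of the corresponding two-copy representations. By Schur--Weyl duality, the range of $\mathcal{P}_H$ is the two-dimensional span of the identity and the swap $F$, and it is contained in the ranges of both $\mathcal{P}_X$ and $\mathcal{P}_Z$; hence $\mathcal{P}_W \mathcal{P}_H = \mathcal{P}_H \mathcal{P}_W = \mathcal{P}_H$.

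By independence of the $D_i^W$, the two-copy channel factorises as $\mathcal{G}^{(2)}_{D[\ell]} = \mathcal{P}_Z (\mathcal{P}_X \mathcal{P}_Z)^{\ell}$, and a short induction on $\ell$ using $\mathcal{P}_W \mathcal{P}_H = \mathcal{P}_H$ yields the algebraic identity
\begin{equation*}
\mathcal{G}^{(2)}_{D[\ell]} - \mathcal{P}_H \;=\; \mathcal{P}_Z \left( \mathcal{P}_X \mathcal{P}_Z - \mathcal{P}_H \right)^{\ell}.
\end{equation*}
This reduces the problem to estimating powers of the single superoperator $\mathcal{P}_X \mathcal{P}_Z - \mathcal{P}_H$, i.e.\ to a spectral-gap analysis of the alternating projection process, after which I would convert the resulting bound to the diamond norm.

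Next I would evaluate $\mathcal{P}_Z$ explicitly in the computational basis: averaging the phase $e^{i(\varphi_{n_1} + \varphi_{n_2} - \varphi_{m_1} - \varphi_{m_2})}$ over uniform $\varphi$ annihilates every matrix unit $\ketbra{n_1 n_2}{m_1 m_2}$ unless $\{n_1,n_2\} = \{m_1,m_2\}$ as multisets, leaving only the ``diagonal'' terms $\ketbra{nn}{nn}$ and $\ketbra{nm}{nm}$ together with the ``coincidence-swap'' terms $\ketbra{nm}{mn}$ for $n \neq m$. Since $\mathcal{P}_X$ is obtained from $\mathcal{P}_Z$ by Hadamard conjugation, all matrix elements of $\mathcal{P}_X \mathcal{P}_Z$ are explicit Hadamard phase sums, and counting the surviving basis contributions should give a second-largest singular value of order $2^{-N}$, yielding $\|\mathcal{G}^{(2)}_{D[\ell]} - \mathcal{P}_H\|_{2 \to 2} = O(2^{-\ell N})$, consistent with the rate in the statement.

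The main obstacle will be turning this Hilbert--Schmidt estimate into a diamond-norm bound with the correct constants. The generic inequality $\|\cdot\|_{\diamond} \leq d_A^2 \|\cdot\|_{2 \to 2}$ costs a prefactor $2^{2N}$ and is far too weak, so instead I would use the explicit structure above to write the Choi operator of $\mathcal{G}^{(2)}_{D[\ell]} - \mathcal{P}_H$ in closed form and compute its trace norm directly; the residue lives in the orthogonal complement of $\mathrm{span}\{I,F\}$, where the action of $\mathcal{P}_X \mathcal{P}_Z$ is under tight algebraic control, and this is what should produce both the prefactor $2$ and the geometric-series correction $\tfrac{2}{2^N-1}$. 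The matching lower bound on $\delta$ would then follow by testing the superoperator difference on a concrete input, such as a single two-copy matrix unit chosen to be orthogonal to $\mathrm{span}\{I,F\}$, to show that the explicit Choi computation is essentially saturated.
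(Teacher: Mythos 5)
Your algebraic reduction is correct and matches the right structural picture: $\mathcal{P}_W := \mathcal{G}^{(2)}_{{\sf D}_W}$ and $\mathcal{P}_H := \mathcal{G}^{(2)}_{\sf H}$ are indeed orthogonal projections in the Hilbert--Schmidt inner product, $\mathrm{range}\,\mathcal{P}_H = \mathrm{span}\{I,F\}$ is contained in both ranges, $\mathcal{P}_W\mathcal{P}_H = \mathcal{P}_H\mathcal{P}_W = \mathcal{P}_H$, and the identity $\mathcal{G}^{(2)}_{D[\ell]} - \mathcal{P}_H = \mathcal{P}_Z(\mathcal{P}_X\mathcal{P}_Z - \mathcal{P}_H)^{\ell}$ follows by exactly the short induction you describe. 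Your explicit description of $\mathcal{P}_Z$ via the vanishing of $\mathbb{E}[e^{i(\varphi_{n_1}+\varphi_{n_2}-\varphi_{m_1}-\varphi_{m_2})}]$ unless $\{n_1,n_2\}=\{m_1,m_2\}$ is also right.

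However, there is a genuine gap, and it is precisely the step you yourself flag as the main obstacle. What actually carries the proof in Ref.~\cite{NHMW2015-1} --- and what the present paper quotes as its Lemma~\ref{Lemma:ME} --- is not a spectral-gap-plus-Choi-norm estimate but an exact algebraic decomposition
\[
\mathcal{R}^{\ell} = (1-p_{\ell})\,\mathcal{G}^{(2)}_{\sf H} + p_{\ell}\,\mathcal{C}^{(\ell)}, \qquad p_{\ell} = \frac{d^{\ell+1}+d^{\ell}-2}{d^{2\ell}(d-1)},
\]
with $\mathcal{C}^{(\ell)}$ a unital CPTP map. Once this is established, the upper bound is immediate and dimension-free: $\mathcal{G}^{(2)}_{D[\ell]} - \mathcal{G}^{(2)}_{\sf H} = p_{\ell}(\mathcal{C}^{(\ell)} - \mathcal{G}^{(2)}_{\sf H})$, and the diamond distance between any two CPTP maps is at most $2$, so $\delta \leq 2p_{\ell}$, which simplifies to the stated upper bound $\frac{2}{d^{\ell}}\bigl(1+\frac{2}{d-1}\bigr)$ after dropping the negative term $-\frac{4}{d^{2\ell}(d-1)}$. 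Your plan, by contrast, ends exactly where this work begins: you defer the conversion of a $\|\cdot\|_{2\to 2}$ estimate into a diamond-norm bound with tight constants to a Choi-operator computation that is never carried out, and you offer no mechanism for producing a CP/CPTP structure of the residual. Moreover, the route you sketch is not quite sound as stated: the trace norm of the Choi operator $\|J(\Phi)\|_1$ is only a lower bound on $\|\Phi\|_\diamond$, not the diamond norm itself, so ``computing the trace norm of the Choi operator directly'' does not by itself give the required upper bound --- you would still need the stabilised $1\to 1$ supremum, which is exactly what the CPTP-mixture structure trivialises. The lower bound on $\delta$ is similarly left as a one-line plan (``test on a matrix unit orthogonal to $\mathrm{span}\{I,F\}$'') without the computation that would actually yield $\frac{2}{d^{\ell}}\bigl(1-\frac{1}{d-1}\bigr)$. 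In short, the framing is correct and consistent with the reference, but the two load-bearing steps --- the explicit mixture decomposition (or an equivalent sharp diamond-norm control) and the matching lower bound --- are missing.
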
 
Theorem~\ref{Thm:old} implies that $\Theta((\log 1/\delta)/N)$ repetitions are necessary and sufficient for $D[\ell]$ to achieve a $\delta$-approximate unitary $2$-design. Together with the decoupling theorem with approximate unitary $2$-designs, given in Eq.~\eqref{Eq:dec2des},
it immediately follows that decoupling at the rate $\Lambda_{\rm Haar}$ is achieved by $D[\ell]$ if $\ell \geq 4$.
However, our first main result shows that $D[\ell]$ for $\ell \geq 2$ is sufficient for achieving decoupling at rate $\Lambda_{\rm Haar}$ (see Section~\ref{Subsec:proof5} for the proof).

\begin{theorem}[Main Result 1: Decoupling with $\mathbf{D[\ell]}$] \label{Thm:decDU}
Let $\ell$ be a natural number. In the setting as Theorem~\ref{Thm:decs}, it follows that
\begin{equation}
\mathbb{E}_{U_A \sim {\sf D}[\ell]}|\!| \mc{T}_{A \rightarrow B}(U_A \rho_{AR} U_A^{\dagger}) - \tau_B \otimes  \rho_R |\!|_1 \leq  \sqrt{1 + 8 d_A^{2-\ell}} 2^{-\frac{1}{2} (H_{\rm 2}(A|R)_{\rho}+H_{\rm 2}(A|B)_{\tau} )}.
\end{equation}
In terms of the smooth conditional min-entropy, we have
\begin{equation}
\mathbb{E}_{U_A \sim {\sf D}[\ell]}|\!| \mc{T}_{A \rightarrow B}(U_A \rho_{AR} U_A^{\dagger}) - \tau_B \otimes  \rho_R |\!|_1 \leq
\sqrt{1+8d_A^{2-\ell}} 2^{-\frac{1}{2} (H^{\epsilon}_{\rm min}(A|R)_{\rho}+H^{\epsilon}_{\rm min}(A|B)_{\tau} )} + 12\epsilon. \label{Eq.epsdecDU}
\end{equation}
\end{theorem}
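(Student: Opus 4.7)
The plan is to apply the one-shot decoupling framework of Dupuis \emph{et al.}, specialised to the ensemble ${\sf D}[\ell]$. Set $M_{BR} := \mc{T}_{A\rightarrow B}(U_A \rho_{AR} U_A^{\dagger}) - \tau_B \otimes \rho_R$, introduce auxiliary states $\sigma_B \in \mc{S}(\mc{H}_B)$, $\sigma_R \in \mc{S}(\mc{H}_R)$, and write $\Sigma := \sigma_B^{-1/4}\otimes\sigma_R^{-1/4}$. A standard three-factor H\"older estimate gives $|\!|M_{BR}|\!|_1 \leq |\!|\Sigma M_{BR} \Sigma|\!|_2$, and Jensen's inequality then yields $\mathbb{E} |\!|M_{BR}|\!|_1 \leq \sqrt{\mathbb{E} |\!|\Sigma M_{BR}\Sigma|\!|_2^2}$, reducing the goal to a bound on an expected squared Hilbert--Schmidt norm.

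The second-moment calculation is the technical core. A short direct computation shows that the first moment of ${\sf D}[\ell]$ on a single matrix unit already matches the Haar value for every $\ell \geq 1$: on an $N_A$-qubit system the overlap $|\langle m|i\rangle_{XZ}|^2$ equals $1/d_A$, so the $X$-diagonal twirl of any $Z$-basis projector produces $I_A/d_A$. Consequently $\mathbb{E}_{U\sim{\sf D}[\ell]}[\mc{T}_{A\rightarrow B}(U\rho_{AR} U^\dagger)] = \tau_B\otimes\rho_R$, the cross terms in the expansion of $\mathbb{E}[M_{BR}^{\otimes 2}]$ combine, and the swap trick $|\!|X|\!|_2^2 = \tr[F_{BR}\, X^{\otimes 2}]$ yields
\begin{equation*}
\mathbb{E}|\!|\Sigma M_{BR}\Sigma|\!|_2^2 = \tr\bigl[F_{BR}\,(\Sigma\otimes\Sigma)\,\mc{T}_{A\rightarrow B}^{\otimes 2}\bigl(\mathbb{E}_{U\sim{\sf D}[\ell]}[U^{\otimes 2}\rho_{AR}^{\otimes 2} U^{\dagger\otimes 2}]\bigr)\,(\Sigma\otimes\Sigma)\bigr] - |\!|\Sigma(\tau_B\otimes\rho_R)\Sigma|\!|_2^2.
\end{equation*}
Splitting the two-fold moment of ${\sf D}[\ell]$ analysed in Ref.~\cite{NHMW2015-1} into a Haar part plus a correction, and tracking each piece through the above trace, the Haar part yields $2^{-(H_2(A|R)_\rho + H_2(A|B)_\tau)}$ after optimising $\sigma_B$ and $\sigma_R$ via the variational definitions of the conditional collision entropies, while the correction contributes the additional $8\,d_A^{2-\ell}$ inside the square root.

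Inequality~\eqref{Eq.epsdecDU} then follows from the collision-entropy bound by the standard smoothing argument. Replace $\rho_{AR}$ and the Choi operator $J(\mc{T}_{A\rightarrow B})$ by sub-normalised nearby operators attaining $H_{\rm min}^\epsilon(A|R)_\rho$ and $H_{\rm min}^\epsilon(A|B)_\tau$, invoke the collision-entropy bound together with $H_{\rm min}\leq H_2$, and absorb the approximation error into the additive $12\epsilon$ term via monotonicity of the purified distance under the isometric Stinespring dilation of $\mc{T}_{A\rightarrow B}$ together with the triangle inequality in the trace norm, exactly as in the proof of Eq.~\eqref{Eq:decHaar}.

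The principal obstacle is the refined tracking of the correction. A black-box invocation of the diamond-norm 2-design bound from Theorem~\ref{Thm:old} fed into Eq.~\eqref{Eq:dec2des} would yield a correction of order $d_A^{4-\ell}$ and force $\ell \geq 4$ to reach the Haar rate. Reducing the exponent to $2-\ell$ requires exploiting that the operators against which the second moment of ${\sf D}[\ell]$ is contracted in the decoupling trace are extremely structured --- essentially combinations of the identity and the swap on $\mc{H}_A^{\otimes 2}$ --- and on this two-dimensional subspace the deviation of the ${\sf D}[\ell]$ moment from the Haar moment is tighter than the worst-case diamond-norm estimate by precisely the missing factor $d_A^2$. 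Making this refined contraction quantitative, using the explicit representation of the second moment of ${\sf D}[\ell]$ worked out in \cite{NHMW2015-1}, is where the real technical effort of the proof lies.
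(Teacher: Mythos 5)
Your overall scaffolding matches the paper's: reduce to a second-moment bound via Jensen and a weighted Hilbert--Schmidt norm, split the two-fold moment of ${\sf D}[\ell]$ into a Haar part plus a correction, and smooth at the end via purified distance and a triangle inequality. Your observation that the first moment of ${\sf D}[\ell]$ is already exactly Haar (hence the cross terms in $\mathbb{E}[M_{BR}^{\otimes 2}]$ are the trivial ones) is correct and is essentially what the paper uses, packaged through the operator $\xi_{A\bar A} = \Phi_{A\bar A} - I_{A\bar A}/d_A^2$.

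However, you explicitly leave the decisive step --- improving the correction from $d_A^{4-\ell}$ to $d_A^{2-\ell}$ --- as ``where the real technical effort lies,'' and the mechanism you sketch for it is not the one that works. The operators contracted against the second moment, $\tilde{\mc{T}}^{*\otimes 2}(\mathbb{F}_{BB'})$ and $\tilde{\mc{E}}^{*\otimes 2}(\mathbb{F}_{RR'})$, are \emph{not} ``essentially combinations of the identity and the swap on $\mc{H}_A^{\otimes 2}$'': they depend on the channel $\mc{T}_{A\to B}$ and the state $\rho_{AR}$ and can be general positive operators on $\mc{H}_A^{\otimes 2}$ with operator norm as large as $d_A$. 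So there is no two-dimensional subspace to restrict to, and the hoped-for tightening of the diamond-norm estimate ``by precisely the missing factor $d_A^2$'' does not come from there. What actually gives the $d_A^{2-\ell}$ is Lemma~\ref{Lemma:ME} (Lemma~2 of Ref.~\cite{NHMW2015-1}): the exact probabilistic decomposition $\mc{R}^{\ell} = (1-p_{\ell})\,\mc{G}_{\sf H}^{(2)} + p_{\ell}\,\mc{C}^{(\ell)}$ with $p_{\ell} = \Theta(d_A^{-\ell})$ and $\mc{C}^{(\ell)}$ a \emph{unital CPTP} map. Unitality lets the $\xi$-cross-terms be absorbed so that $\mc{C}^{(\ell)}(\xi\otimes\xi) = \mc{C}^{(\ell)}(\Phi\otimes\Phi) - I/d_A^4$, whose trace norm is at most $2$ because $\mc{C}^{(\ell)}$ is trace preserving; a H\"older estimate together with Lemma~\ref{Lemma:} ($|\!|\tilde{\mc{C}}^{*\otimes 2}(\mathbb{F})|\!|_\infty \leq d_A \tr J(\tilde{\mc{C}})^2$) then bounds the correction term by $2 d_A^2\, \tr\tilde\tau_{AB}^2\,\tr\tilde\rho_{AR}^2$, and the product $p_\ell \cdot 2d_A^2$ is what produces $O(d_A^{2-\ell})$. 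Without invoking this specific decomposition (or something equivalent), your argument does not close: the black-box diamond-norm bound you correctly identify as insufficient is all that remains available.
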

Noting the factor $d_A^{2-\ell}$ in the coefficient, we observe that decoupling at the rate $\Lambda_{\rm Haar}$ can be achieved by $D[\ell]$ if $\ell \geq 2$.
This result can be rephrased in terms of the approximations of $2$-designs. Because $D[2]$ is a $\Theta(d_A^{-2})$-approximate unitary $2$-design and cannot be better due to Theorem~\ref{Thm:old},
this provides the first instance of a random unitary that is less uniform than a $O(d_A^{-4})$-approximate unitary $2$-design but is able to achieve decoupling at the rate $\Lambda_{\rm Haar}$.

We also provide a probabilistic statement about decoupling with $D[\ell]$. 
\begin{theorem}[Main Result 2: Probabilistic decoupling theorem with $\mathbf{D[\ell]}$] \label{Thm:CoMtorus}
For a unitary $U_A$ drawn uniformly at random according to the probability measure ${\sf D[\ell]}$, and $\eta \geq 0$, it follows that 
\begin{equation}
{\rm Prob}_{U_A \sim {\sf D}[\ell]} \biggl[ |\!| \mc{T}_{A \rightarrow B}(U_A \rho_{AR} U_A^{\dagger}) - \tau_B \otimes  \rho_R |\!|_1
 \leq
2\Delta_{D[\ell]} + \eta  \biggr]
\geq  
1- 2 e^{- \chi_{\ell} d_A \eta^4 },
\end{equation}
where $\Delta_{D[\ell]}=\sqrt{1+8d_A^{2-\ell}} 2^{-\frac{1}{2} (H^{\epsilon}_{\rm min}(A|R)_{\rho}+H^{\epsilon}_{\rm min}(A|B)_{\tau} )} + 12\epsilon$ for $\epsilon>0$,
and $\chi_{\ell}^{-1}=2^{11}(2 \ell+1)^3 K^4 \pi^2$, and $K=d_A|\!| \rho_A |\!|_{\infty}$.
\end{theorem}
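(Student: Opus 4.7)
The plan is to derive Theorem~\ref{Thm:CoMtorus} by sub-Gaussian concentration of measure on the torus that parametrises ${\sf D}[\ell]$, applied to the squared Hilbert--Schmidt distance rather than the trace distance directly. Write $D[\ell] = D[\ell](\vec\varphi)$ for $\vec\varphi\in T^m$ with $m = (2\ell+1)d_A$ (one phase per basis state per layer); then ${\sf D}[\ell]$ corresponds to the uniform product measure $\mu$ on $T^m$. Define
\begin{align*}
f(\vec\varphi) &:= \bigl\| \mc{T}_{A\rightarrow B}(D[\ell](\vec\varphi)\rho_{AR}D[\ell](\vec\varphi)^{\dagger}) - \tau_B\otimes\rho_R\bigr\|_1, \\
g(\vec\varphi) &:= \bigl\| \mc{T}_{A\rightarrow B}(D[\ell](\vec\varphi)\rho_{AR}D[\ell](\vec\varphi)^{\dagger}) - \tau_B\otimes\rho_R\bigr\|_2^2.
\end{align*}
Since Theorem~\ref{Thm:decDU} is proved by first bounding $\mathbb{E}_\mu[g]$ and only then invoking $\|X\|_1\leq\sqrt{r}\,\|X\|_2$ for an appropriate dimension/rank factor $r$, we may harvest $r\,\mathbb{E}_\mu[g]\leq \Delta_{D[\ell]}^2$ directly from that proof.

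The central technical step is a Lipschitz estimate for $g$ on $T^m$ with the product Euclidean metric. Each phase $\varphi_{j,k}^W$ enters $D[\ell]$ only through one factor $e^{i\varphi_{j,k}^W}\ketbra{k}{k}_W$ in a single layer, so $\|\partial_{j,k}^W D[\ell]\|_\infty\leq 1$. Differentiating $g = \tr\bigl[(\mc{T}(U\rho_{AR}U^{\dagger})-\tau_B\otimes\rho_R)^2\bigr]$ via the chain rule, using contractivity of $\mc{T}$ in the Schatten norms, and tracking how a perturbation propagates through the $2\ell+1$ nested conjugations (and how the prefactor $\|\rho_A\|_\infty = K/d_A$ appears), I expect to arrive at a Lipschitz constant $L_g$ satisfying $r^2 L_g^2 = \Theta\bigl((2\ell+1)^3 K^4/d_A\bigr)$, the cubic dependence on $2\ell+1$ coming from $m\propto 2\ell+1$ summands each of size $\propto (2\ell+1)$. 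The standard Gaussian concentration on $T^m$, derived from tensorising the Poincar\'e inequality on $S^1$ over the $m$ coordinates (which contributes the $\pi^2$ in $\chi_\ell$), then yields
$$\Pr_{\vec\varphi\sim\mu}\bigl[\, g - \mathbb{E}_\mu[g] > t\,\bigr] \leq 2\exp\bigl(-t^2/(2\pi^2 L_g^2)\bigr).$$

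To transfer this back to the trace distance, combine $f^2 \leq r\,g$ with $r\,\mathbb{E}_\mu[g]\leq \Delta_{D[\ell]}^2$: the event $\{f > 2\Delta_{D[\ell]} + \eta\}$ forces $g > (2\Delta_{D[\ell]}+\eta)^2/r \geq \mathbb{E}_\mu[g] + \eta^2/r$, using $(2\Delta+\eta)^2 - \Delta^2\geq \eta^2$. Substituting $t = \eta^2/r$ into the concentration bound produces the $\eta^4$ scaling, and the coefficient $d_A/\bigl(2^{11}(2\ell+1)^3 K^4\pi^2\bigr) = \chi_\ell d_A$ then falls out from the target relation $r^2 L_g^2 = \Theta\bigl((2\ell+1)^3 K^4/d_A\bigr)$. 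The main obstacle will be precisely the Lipschitz calculation: carefully tracking a single-phase perturbation through $2\ell+1$ conjugations and the channel $\mc{T}$ to extract the exact scaling $(2\ell+1)^3 K^4$, and ensuring that the dimensional factors in $L_g$ and in the transfer inequality $f^2\leq rg$ conspire to give exactly $d_A\eta^4$ (rather than some other power of $d_A$) in the exponent.
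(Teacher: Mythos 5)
Your proposal and the paper's proof both parametrise ${\sf D}[\ell]$ by a torus $\mathbb{T}^{(2\ell+1)d_A}$ and aim to apply measure concentration there, but the structure is genuinely different: the paper works directly with the trace-distance function $\Delta(U_A)$ and proves a \emph{H\"older-$1/2$} continuity bound $|\Delta(U_A)-\Delta(V_A)|\leq 4(2\ell+1)K\sqrt{\pi}\,\sqrt{d(\phi_U,\phi_V)}$ in the \emph{normalised $\ell^1$-metric} $d(\vec s,\vec t)=\tfrac{1}{2\pi m}\sum_i|s_i-t_i|$, and then invokes Ledoux's $\ell^1$-product concentration (Lemma~\ref{Lemma:CoM}) with $r\propto\eta^2$; the $\eta^4$ in the exponent is an immediate consequence of the square-root in the H\"older estimate. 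You instead propose a genuine Lipschitz bound for the \emph{squared} Hilbert--Schmidt quantity $g$ in the Euclidean metric, Gaussian concentration from a one-dimensional functional inequality, and a transfer step $f^2\leq rg$.

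There are several concrete obstructions to your route. First and most seriously, the claimed Lipschitz constant $L_g$ with $r^2L_g^2=\Theta\bigl((2\ell+1)^3K^4/d_A\bigr)$ is not established and appears to have the wrong $d_A$-dependence. A direct estimate of the gradient gives $|\partial_{\varphi}g|\lesssim \sqrt{g}\cdot K$ for each of the $m=(2\ell+1)d_A$ phases, so $|\nabla g|\lesssim K\sqrt{m}\,\sqrt{g}$; this says $\sqrt{g}$ (not $g$) is the Lipschitz quantity, with constant of order $K\sqrt{(2\ell+1)d_A}$. Plugging this into a sub-Gaussian bound and into your transfer step puts $d_A$ in the \emph{denominator} of the exponent rather than the numerator. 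In other words, the H\"older-$1/2$ behaviour is intrinsic — it comes from $|e^{ix}-e^{iy}|^2\leq|x-y|$ — and squaring does not remove it. Second, the Poincar\'e inequality on $S^1$ only yields sub-exponential (not sub-Gaussian) concentration; you would need a log-Sobolev inequality, and more importantly the relevant constant for the $\ell^1$-product metric is precisely what Lemma~\ref{Lemma:CoM} supplies. Third, CPTP maps are contractive in the $1\to 1$ norm but not in the Hilbert--Schmidt norm, so ``contractivity of $\mc{T}$ in the Schatten norms'' is not available; the paper keeps the estimate in $1$-norm throughout and factors out $K=\|\mc{T}\otimes\mc{E}\|_{1\to1}$. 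Fourth, the step ``harvest $r\,\mathbb{E}_\mu[g]\leq\Delta_{D[\ell]}^2$ directly'' does not match how Theorem~\ref{Thm:decDU} is proved: that proof controls the $\sigma^{-1/4}$-weighted quantity $\tr\tilde\tau_{AB}^2\tr\tilde\rho_{AR}^2$ via the swap-trick and Lemma~\ref{Lemma:}, not a raw squared $2$-norm with an explicit rank prefactor $r$.

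The missing idea is this: parametrise $U_A\mapsto\phi_U\in\mathbb{T}^{(2\ell+1)d_A}$, prove via the chain-rule Lemma~\ref{Lemma:chainrule} and the elementary bounds $x^2\leq x$ on $[-1,1]$ and $|e^{ix}-e^{iy}|\leq|x-y|$ that $\Delta$ is H\"older-$1/2$ in the normalised $\ell^1$-metric, apply Lemma~\ref{Lemma:CoM} to the median super-level set with $r=\eta^2/\bigl(2^4(2\ell+1)^2K^2\pi\bigr)$, and finally pass from median to mean with Markov's inequality ($\Delta_M\leq2\mathbb{E}\Delta$) so that $\Delta_{D[\ell]}$ from Eq.~\eqref{Eq.epsdecDU} can be inserted.
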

The proof is given in Section~\ref{Proof:CM}.\\

We here comment on an efficient implementation of $D[\ell]$ and the amount of randomness needed for the implementation. 
Because $D[\ell]$ consists of random $X$- and $Z$-diagonal unitaries and both cannot be exactly implemented by quantum circuits in efficient ways~\cite{NKM2014}, exact $D[\ell]$ cannot be implemented efficiently. 
It is also a problem that drawing a random phase from $[0, 2 \pi)$ uniformly at random requires an infinite number of random bits, which is surely inefficient.
To overcome these problems, we use the standard trick that 
decoupling is proved by considering first and second moments of random unitaries (see e.g. Ref.~\cite{DBWR2010}).
That is, if we use a random unitary that simulates the second order moments of $D[\ell]$, we obtain exactly the same result as Theorem~\ref{Thm:decDU}.
Simulating lower order moments of random $Z$-diagonal unitaries was thoroughly investigated in Ref.~\cite{NKM2014}, from which we obtain conclusions that the same statement as Theorem~\ref{Thm:decDU} holds not only for $D[\ell]$ but also for quantum circuits, where the number of gates is $\Theta(N_A^2)$ and the same amount of random bits is used.
This is as efficient as most other implementations of unitary $2$-designs leading to decoupling~\cite{DLT2002, DCEL2009, HL2009}, both in terms of the number of gates and the amount of randomness, but there is a better implementation, which uses $O(N_A \log^2 N_A)$ gates and $O(N_A)$ random bits~\cite{CLLW2015}. Our circuits have however a simple structure, leading to interesting insights in certain situations. For instance, it is possible to interpret the circuits as dynamics generated by many-body Hamiltonians. Then, the corresponding time-dependent Hamiltonians turn out to be certain types of quantum chaos, implying that decoupling is a typical phenomena in chaotic many-body systems. For more details, see Ref.~\cite{NHMW2015-1, NHKW2016}. 

Finally, we briefly mention implications of our results to two important applications of decoupling. For more elaborated explanations and formal statements, see Appendix~\ref{App:decApp}.

\emph{Coherent state merging}--
The task of coherent state merging~\cite{ADHW2009, DH2011} is to transfer some part $A$ of a given tripartite quantum state $\Psi_{ABR}$ from one party to another which already possess a part $B$ of the state, while leaving the reference $R$ unchanged. 
At the same time both parties would like to generate as much shared entanglement as possible.
This protocol is known as the mother of all protocols~\cite{ADHW2009} and is of significant importance. 
Our result offers the opportunity to implement the encoding for coherent state merging using $Z$- and $X$-diagonal unitaries, which achieves almost optimal rates~\cite{DH2011} when $\ell \geq 2$.

\emph{Relative thermalisation}--
One of the most fundamental questions in quantum thermodynamics is how the quantum state in a part of large system spontaneously thermalises. 
As thermal states naturally lose any correlations with other systems, the question can be extended to the idea of \emph{relative thermalisation}~\cite{dRHRW2014}.
It is known that, assuming that the time-evolution is modelled by a Haar random unitary, a system is typically decoupled from other systems and 
relative thermalisation is naturally derived~\cite{dRHRW2014}.
Since our result implies that typical dynamics in quantum chaotic systems, modelled by $D[\ell]$, achieves decoupling as strong as Haar random unitaries, we obtain an interesting observation that \emph{relative thermalisation shall be a typical phenomena in certain quantum chaotic systems}.

\section{Proofs} \label{Proofs}

Here, we provide proofs of our main results. We start with presenting the key lemmas in Subsection~\ref{SS:AL}. A proof of Theorems~\ref{Thm:decDU} and~\ref{Thm:CoMtorus} is given in Subsection~\ref{Subsec:proof5} and \ref{Proof:CM}, respectively.

\subsection{Key Lemmas} \label{SS:AL}

Here, we provide two key lemmas of the proofs of Theorems~\ref{Thm:decDU} and~\ref{Thm:CoMtorus}.
Other lemmas needed to prove Theorem~\ref{Thm:decDU} are listed in Appendix~\ref{App:AL}.

The most important lemma in the proof of Theorem~\ref{Thm:decDU} is about the CPTP map $\mc{R}$ defined by $\mc{R}= \mc{G}_{D^Z}^{(2)} \circ \mc{G}_{D^X}^{(2)} \circ \mc{G}_{D^Z}^{(2)}$, where $\mc{G}_{U}^{(2)}$ for a random unitary $U$ is given by $\mc{G}_{U}^{(2)}(X) = \mathbb{E}_{U}[U^{\otimes 2} X U^{\dagger \otimes 2}]$.
The $\ell$ repetitions of this map are known to be decomposed into a probabilistic mixture of two CPTP maps.
%The key observation to obtain our results is that the CPTP map $\mc{R}$ has a strong ability of randomisation, at least up to the second order, which is formally given in the following lemma.
 
\begin{lemma}[Lemma 2 in Ref.~\cite{NHMW2015-1}] \label{Lemma:ME}
Let $\ell$ be a natural number. Then, $\ell$ repetitions of the CPTP map $\mc{R}$, denoted by $\mc{R}^{\ell}$, is given by
$\mc{R}^{\ell} = (1-p_{\ell}) \mc{G}_{\sf H}^{(2)} + p_{\ell} \mc{C}^{(\ell)}$,
where $p_{\ell}=\frac{d^{\ell+1}+d^{\ell}-2}{d^{2\ell}(d-1)} = \Theta(d^{-\ell})$, and $\mc{C}^{(\ell)}$ is a unital CPTP map.
\end{lemma}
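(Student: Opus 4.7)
The plan is to diagonalise $\mc{R}$ spectrally and then read off $\mc{R}^{\ell}$ power by power. The structural fact that makes this possible is that every twirl making up $\mc{R}$ is a twirl by unitaries of the form $D \otimes D$, so $\mc{R}$ commutes with the Haar twirl $\mc{G}^{(2)}_{\sf H}$ and in particular preserves the two-dimensional fixed subspace $V_0 := \mathrm{span}\{I^{\otimes 2}, F\}$ of $\mc{G}^{(2)}_{\sf H}$, where $F$ is the swap on $\mc{H}^{\otimes 2}$. In fact $\mc{R}$ acts as the identity on $V_0$, because each of $I^{\otimes 2}$ and $F$ is fixed by $U \otimes U$ for \emph{every} unitary $U$. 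It therefore suffices to control $\mc{R}$ on the orthogonal complement $V_0^{\perp} \subset \mc{B}(\mc{H}^{\otimes 2})$ with respect to the Hilbert--Schmidt inner product.

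The computation on $V_0^{\perp}$ I would carry out in the computational basis. A direct calculation shows that $\mc{G}^{(2)}_{D^Z}$ projects onto operators whose matrix elements $\ketbra{ij}{kl}$ vanish unless $\{i,j\} = \{k,l\}$ as multisets, reducing any $X \in \mc{B}(\mc{H}^{\otimes 2})$ to the ``diagonal plus swap-diagonal'' form $\sum_{i,j} x_{ij}\ketbra{ij}{ij} + \sum_{i \neq j} y_{ij}\ketbra{ij}{ji}$. The twirl $\mc{G}^{(2)}_{D^X}$ has exactly the same form after the Hadamard-type change of basis that interchanges $Z$- and $X$-eigenstates. Composing the three projections and tracking the combinatorial factors produced when the $X$-basis projector is re-expressed in the $Z$-basis reveals that $\mc{R}|_{V_0^{\perp}}$ has a small set of eigenvalues, all of order $1/d$, so its powers contract geometrically towards zero.

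Once the spectrum is in hand, the decomposition $\mc{R}^{\ell} = (1 - p_{\ell})\mc{G}^{(2)}_{\sf H} + p_{\ell}\mc{C}^{(\ell)}$ is obtained by \emph{defining} $\mc{C}^{(\ell)} := p_{\ell}^{-1}\bigl(\mc{R}^{\ell} - (1-p_{\ell})\mc{G}^{(2)}_{\sf H}\bigr)$ and fixing $p_{\ell}$ so that $\mc{C}^{(\ell)}$ is trace preserving. Testing both sides against a conveniently chosen operator (for instance $\ketbra{0}{0}^{\otimes 2}$, whose Haar twirl is explicit) pins $p_{\ell}$ down, and one can verify by induction on $\ell$, using the eigenvalues from the previous step, that $p_{\ell} = (d^{\ell+1} + d^{\ell} - 2)/\bigl(d^{2\ell}(d-1)\bigr)$. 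Unitality of $\mc{C}^{(\ell)}$ is automatic, since $\mc{R}^{\ell}$ and $\mc{G}^{(2)}_{\sf H}$ are both unital. Complete positivity of $\mc{C}^{(\ell)}$ requires showing that the Choi matrix of $\mc{R}^{\ell} - (1-p_{\ell})\mc{G}^{(2)}_{\sf H}$ is positive semidefinite, which one handles by writing this difference in the spectral basis of $\mc{R}|_{V_0^{\perp}}$ and checking non-negativity term by term.

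The main obstacle is the detailed spectral analysis on $V_0^{\perp}$: the $Z$- and $X$-twirls do not share an eigenbasis, so one must either grind through the computation in a single basis, keeping careful track of how the Hadamard change of basis mixes the two projectors, or bring in a representation-theoretic framework tailored to the commutants of $D \otimes D$ rather than of $U \otimes U$. The secondary subtlety is the complete-positivity check, since subtracting a positive multiple of the Haar projection can in principle break positivity of the Choi matrix; the spectral picture is needed not only to identify $p_{\ell}$ but also to ensure this subtraction is safe.
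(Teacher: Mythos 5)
First, note that the paper itself offers no proof of this lemma: it is cited verbatim as Lemma~2 of Ref.~\cite{NHMW2015-1}, so there is no in-paper argument to compare against, only the soundness of your proposal. Your structural preliminaries are correct: each $\mc{G}_{D^W}^{(2)}$ is an orthogonal Hilbert--Schmidt projection whose range contains $V_0 := \mathrm{span}\{I^{\otimes 2}, \mathbb{F}\}$, hence $\mc{R}$ preserves both $V_0$ and $V_0^\perp$, commutes with $\mc{G}_{\sf H}^{(2)}$, and restricts to the identity on $V_0$; your description of the ranges of the two dephasing twirls is also right.

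The genuine gap is the mechanism you propose for determining $p_\ell$. Both $\mc{R}^\ell$ and $\mc{G}_{\sf H}^{(2)}$ are trace preserving, so $\mc{C}^{(\ell)} := p_\ell^{-1}\bigl(\mc{R}^\ell - (1-p_\ell)\mc{G}_{\sf H}^{(2)}\bigr)$ is trace preserving for \emph{every} nonzero $p_\ell$: trace preservation constrains nothing, exactly as unitality constrains nothing (which you did note). Evaluating the decomposition on $\ketbra{0}{0}^{\otimes 2}$ is likewise vacuous, since once $\mc{C}^{(\ell)}$ is defined by that formula the identity $\mc{R}^\ell = (1-p_\ell)\mc{G}_{\sf H}^{(2)}+p_\ell\mc{C}^{(\ell)}$ holds tautologically for any $p_\ell$ and any test input. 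The one nontrivial constraint, and therefore the entire content of the lemma, is that $p_\ell$ can be taken of order $d^{-\ell}$ while keeping $\mc{C}^{(\ell)}$ \emph{completely positive}; you relegate this to a ``secondary subtlety'' with no plan beyond ``checking term by term.'' Two further warning signs that the plan as written would not close: (i) starting from $\mc{R} = (1-p_1)\mc{G}_{\sf H}^{(2)}+p_1\mc{C}^{(1)}$ with $\mc{C}^{(1)}$ CPTP and composing $\ell$ times, using that $\mc{G}_{\sf H}^{(2)}$ is absorbing, gives a valid decomposition but only with $p_\ell = p_1^\ell = \bigl((d+2)/d^2\bigr)^\ell$, which for $\ell \geq 2$ is strictly larger than the stated $p_\ell = (d^{\ell+1}+d^\ell-2)/\bigl(d^{2\ell}(d-1)\bigr)$, so the claimed constant cannot come from the induction you gesture toward and one must analyse $\mc{R}^\ell$ directly; (ii) knowing that the eigenvalues of $\mc{R}|_{V_0^\perp}$ are $O(1/d)$ controls the Hilbert--Schmidt operator norm of $\mc{R}^\ell|_{V_0^\perp}$, which is a genuinely different statement from positivity of the Choi matrix of $\mc{C}^{(\ell)}$, and the passage from the former to the latter is precisely the computation your proposal still owes.
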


For the proof of the probabilistic statement of decoupling, i.e. Theorem~\ref{Thm:CoMtorus}, we use the concentration of measure on a hypertorus. Let us consider a $K$-dimensional torus $\mathbb{T}^{K}$ equipped with a normalized product measure $\mu$ and the normalized $\ell^1$-metric $d(\vec{s}, \vec{t})$, given by $\frac{1}{K}\sum_{i=1}^K |s_i - t_i|$.
Let $\vartheta(r)$ for $r>0$ be the concentration function on it defined by $\vartheta(r) = \sup\{1- \mu(A_r) : A \subset \mathbb{T}^{K}, \mu(A) \geq 1/2 \}$, where $A_r = \{ x \in \mathbb{T}^{K}: d(x, A) <r \}$. 

\begin{lemma}[Corollary of Theorem 4.4 in Ref.~\cite{L2001}] \label{Lemma:CoM}
The concentration function on $(\mathbb{T}^{K}, \mu, d)$, defined above, satisfies $\vartheta(r) \leq 2 e^{-\frac{K r^2}{8}}$.
\end{lemma}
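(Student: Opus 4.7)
The plan is to prove the stated Gaussian concentration by the bounded-differences (McDiarmid) method applied to the distance-function from a fixed set. Concretely, fix a measurable $A \subset \mathbb{T}^K$ with $\mu(A) \geq 1/2$ and define $f(\vec{s}) := d(\vec{s}, A) = \inf_{\vec{t}\in A} \frac{1}{K}\sum_{i=1}^K |s_i - t_i|$. The first thing to check is that $f$ is $1$-Lipschitz with respect to $d$ and, crucially, satisfies a single-coordinate oscillation bound: whenever $\vec{s}$ and $\vec{s}'$ agree outside the $i$-th slot, $|f(\vec{s}) - f(\vec{s}')| \leq |s_i - s_i'|/K \leq c/K$, where $c$ is the diameter of one circle factor. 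Summing the squared increments gives $\sum_i c_i^2 \leq c^2/K$, which is the input the concentration bound will feed on.

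Second, I would consider the Doob decomposition $f - \mathbb{E}f = \sum_{i=1}^K (M_i - M_{i-1})$ with $M_i = \mathbb{E}[f \mid s_1,\ldots,s_i]$. Because $\mu$ is a product measure, the single-coordinate oscillation bound lifts to an almost-sure bound $|M_i - M_{i-1}| \leq c/K$ on the martingale increments, and the Azuma--Hoeffding inequality then yields the subgaussian tail $\mu\bigl[\,|f - \mathbb{E}f| \geq t\,\bigr] \leq 2 e^{-K t^2/8}$, with the constant $1/8$ arising from the sharp Hoeffding subgaussian constant applied to the per-coordinate diameter (correctly normalised for $S^1$). To convert this into the claimed bound on $\vartheta(r)$, note that $\mu(\{f=0\}) \geq \mu(A) \geq 1/2$, so $0$ is a median of $f$; the standard median-versus-mean comparison for subgaussian functions then gives $\mu(A_r^c) = \mu(\{f\geq r\}) \leq 2 e^{-K r^2/8}$, which is exactly the claimed bound on the concentration function since $A$ was arbitrary.

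The delicate step I anticipate is pinning down the precise constant $1/8$: this requires the per-coordinate increment bounds to match Hoeffding's sharp value and a clean passage between concentration around $\mathbb{E}f$ and concentration around the median. A conceptually cleaner route, which avoids the mean-versus-median bookkeeping entirely, is via the logarithmic Sobolev inequality for the uniform measure on $S^1$: one tensorises the single-circle log-Sobolev inequality over the $K$ factors by subadditivity of entropy and then invokes Herbst's exponential-moment argument to directly extract a subgaussian tail with the correct Gaussian exponent. This is essentially the route underlying the cited Theorem~4.4 of \cite{L2001}, and it makes the dependence of the exponent on $K$ and on the diameter of the circle factor transparent.
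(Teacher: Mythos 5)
Your route — bounded-differences/martingale concentration for $f = d(\cdot, A)$ followed by a mean-to-median comparison — is indeed what underlies the cited theorem of Ledoux, so the overall plan is sound. However, the constant bookkeeping in your two closing steps is wrong, and the two errors happen to cancel, which is why the final exponent looks right. With per-coordinate oscillation $c_i \leq 1/K$ (the natural bound, since $f$ is $1$-Lipschitz for $d=\frac{1}{K}\sum_i|s_i-t_i|$ with per-circle diameter $1$ in the normalized convention), Azuma--Hoeffding gives
\[
\mu\bigl(|f-\mathbb{E}f|\geq t\bigr)\leq 2\,e^{-Kt^2/2}
\]
(even $2\,e^{-2Kt^2}$ in McDiarmid's sharper form), not $2\,e^{-Kt^2/8}$; obtaining $1/8$ from Hoeffding's lemma plus the per-coordinate diameter would force the circle factors to have diameter $2$, which is not the normalization at hand.

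Conversely, the assertion that the ``standard median-versus-mean comparison'' preserves the exponent is false — that passage genuinely degrades the exponent, and is exactly where the remaining factor of $4$ in $8 = 2\times 4$ enters. Concretely: from the two-sided bound $\mu(|f-\mathbb{E}f|\geq t)\leq 2\,e^{-\alpha t^2}$ with $\alpha=K/2$, together with $\mu(f=0)\geq\mu(A)\geq 1/2$, one gets $\mathbb{E}f\leq\sqrt{\ln 2/\alpha}$; for $r\geq 2\,\mathbb{E}f$,
\[
\mu(f\geq r)\leq \mu\bigl(f-\mathbb{E}f\geq r/2\bigr) \leq e^{-\alpha r^2/4}=e^{-Kr^2/8},
\]
while for $r<2\,\mathbb{E}f\leq 2\sqrt{\ln2/\alpha}$ one has $\alpha r^2/4<\ln 2$, so $2\,e^{-Kr^2/8}>1$ and the bound is vacuous. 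Thus the factor $8$ splits as $2$ from Azuma (diameter $1$) and $4$ from the mean-to-median passage — the opposite of your assignment. A quick sanity check: if the median-versus-mean step were exponent-preserving as you claim, you would have proved the strictly stronger $\vartheta(r)\leq 2\,e^{-Kr^2/2}$, which should have raised a flag. The log-Sobolev/Herbst alternative you mention likewise yields concentration around the \emph{mean}, so it does not sidestep this final factor-of-$4$ step either.
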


\subsection{Proof of Theorem~\ref{Thm:decDU}} \label{Subsec:proof5}

We now prove Theorem~\ref{Thm:decDU}. 
We first use the standard technique of investigating decoupling (see e.g. Ref.~\cite{DBWR2010}),
and then combine it with the key lemma to obtain the statement.

We introduce a CP map $\mc{E}_{\bar{A} \rightarrow R}$ from $\mc{B}(\mc{H}_{\bar{A}})$ to $\mc{B}(\mc{H}_R$) satisfying $J(\mc{E}_{\bar{A} \rightarrow R})=\rho_{AR}$, where $\mc{H}_{\bar{A}} \cong \mc{H}_{A}$. 
%Note that $\mc{E}_{\bar{A} \rightarrow R}$ is not trace-preserving in general as $\rho_R$ is not necessarily the completely mixed state (see Lemma~\ref{Lemma:PropJ}).
In terms of $\mc{E}_{\bar{A} \rightarrow R}$, the resulting state after the process is given by
\begin{equation}
\mc{T}_{A \rightarrow B}  (U_A \rho_{AR} U_A^{\dagger})
=
\mc{T}_{A \rightarrow B} \otimes \mc{E}_{\bar{A} \rightarrow R}  (U_A \Phi_{A\bar{A}} U_A^{\dagger}),
\end{equation}
where $\Phi_{A\bar{A}}$ is the maximally entangled state between $A$ and $\bar{A}$. 
Denoting $\Phi_{A \bar{A}} - I_A/d_A \otimes I_{\bar{A}}/d_A$ by $\xi_{A \bar{A}}$, 
$|\!| \mc{T}_{A \rightarrow B} \otimes \mc{E}_{\bar{A} \rightarrow R} (U_A \Phi_{A\bar{A}} U_A^{\dagger}) - \tau_B \otimes  \rho_R |\!|_1$
is written as
$|\!| \mc{T}_{A \rightarrow B} \otimes \mc{E}_{\bar{A} \rightarrow E} (U_A \xi_{A \bar{A}} U_A^{\dagger}) |\!|_1$.
Following the same approach in Ref.~\cite{DBWR2010}, we obtain
\begin{equation}
\bigl( \mathbb{E}_{U_A \sim {\sf D}[\ell]}|\!| \mc{T}_{A \rightarrow B} \otimes \mc{E}_{\bar{A} \rightarrow R}(U_A \xi_{A \bar{A}} U_A^{\dagger}) |\!|_1 \bigr)^2
\leq
\tr \mathbb{E}_{U_A \sim {\sf D}[\ell]} \bigl[ (U_A \xi_{A \bar{A}} U_A^{\dagger})^{\otimes 2} \bigr] \tilde{\mc{T}}_{B \rightarrow A}^{* \otimes 2}(\mathbb{F}_{BB'}) \otimes \tilde{\mc{E}}_{R \rightarrow \bar{A}}^{* \otimes 2} (\mathbb{F}_{RR'}),
\end{equation}
where $\tilde{\mc{T}}_{A \rightarrow B}(\rho) := \sigma_B^{-1/4} \mc{T}_{A \rightarrow B}(\rho) \sigma_B^{-1/4}$, $\tilde{\mc{E}}_{\bar{A} \rightarrow R}(\rho) := \sigma_{R}^{-1/4} \mc{E}_{\bar{A} \rightarrow R}(\rho) \sigma_{R}^{-1/4}$ for any $\sigma_{W} \in \mc{S}(\mc{H}_W)$ ($W=B,R$), 
$*$ represents the adjoint in terms of the Hilbert-Schmidt inner product, and 
$\mathbb{F}_{WW'} \in \mc{B}(\mc{H}_{W}^{\otimes 2})$ $(W=B,R)$ is the swap operator defined by $\sum_{i,j} \ketbra{ij}{ji}_{WW'}$.

Now, we rewrite the average operator $\mathbb{E}_{U_A \sim {\sf D}[\ell]} \bigl[ (U_A \xi_{A \bar{A}} U_A^{\dagger})^{\otimes 2} \bigr]$ in terms of the map $\mc{R}_{AA'}$. To this end, we use a fact that applying two random $Z$-diagonal unitaries is equivalent to applying only one, which is due to the uniform distribution of the unitaries.
Hence, the random unitary $D[\ell]$ is equivalently given by
$D[\ell] =  \prod_{i=\ell}^1 \tilde{D}_i^{Z} D_i^X D_i^Z$, where all random diagonal unitaries are independent and $\prod_{i=\ell}^1$ is a product in the reverse order.
Recalling the definition of $\mc{R}_{AA'}$ and noting that, when an average is taken over all random unitaries, the independence of random unitaries enables us to take averages of each random unitary separately, we obtain 
\begin{equation}
\mathbb{E}_{U_A \sim {\sf D}[\ell]} \bigl[ (U_A \xi_{A \bar{A}} U_A^{\dagger})^{\otimes 2} \bigr]
=
\mc{R}_{AA'}^{\ell}(\xi_{A \bar{A}} \otimes \xi_{A' \bar{A'}}),
\end{equation}
which leads to an upper bound in terms of $\mc{R}_{AA'}$:
\begin{align}
\bigl( \mathbb{E}_{U_A \sim {\sf D}[\ell]}|\!| \mc{T}_{A \rightarrow B} \otimes \mc{E}_{\bar{A} \rightarrow E}(U_A \xi_{A \bar{A}} U_A^{\dagger}) |\!|_1 \bigr)^2
& \leq
\tr  \mc{R}_{AA'}^{\ell}(\xi_{A \bar{A}} \otimes \xi_{A' \bar{A'}})\tilde{\mc{T}}_{B \rightarrow A}^{* \otimes 2}(\mathbb{F}_{BB'}) \otimes \tilde{\mc{E}}_{R \rightarrow \bar{A}}^{* \otimes 2} (\mathbb{F}_{RR'}). \label{Eq:erk}
\end{align}

Due to Lemma~\ref{Lemma:ME}, the map $\mc{R}_{AA'}^{\ell}$ can be further decomposed into the map corresponding to a Haar random unitary and the unital CPTP map $\mc{C}^{(\ell)}$, leading to
\begin{multline}
\bigl( \mathbb{E}_{U_A \sim {\sf D}[\ell]}|\!| \mc{T}_{A \rightarrow B} \otimes \mc{E}_{\bar{A} \rightarrow E}(U_A \xi_{A \bar{A}} U_A^{\dagger}) |\!|_1 \bigr)^2
\leq
(1-p_{\ell}) \tr   \mc{G}_{{\sf H }, AA'}^{(2)} (\xi_{A \bar{A}} \otimes \xi_{A' \bar{A'}}) \tilde{\mc{T}}_{B \rightarrow A}^{* \otimes 2}(\mathbb{F}_{BB'}) \otimes \tilde{\mc{E}}_{R \rightarrow \bar{A}}^{* \otimes 2} (\mathbb{F}_{RR'})\\
+
p_{\ell} \tr \mc{C}_{AA'}^{(\ell)}(\xi_{A \bar{A}} \otimes \xi_{A' \bar{A'}}) \tilde{\mc{T}}_{B \rightarrow A}^{* \otimes 2}(\mathbb{F}_{BB'}) \otimes \tilde{\mc{E}}_{R \rightarrow \bar{A}}^{* \otimes 2} (\mathbb{F}_{RR'}) \label{Eq:334t,}
\end{multline}
The first term of the right-hand side in Eq.~\eqref{Eq:334t,} is exactly the same as the term investigated in Ref.~\cite{DBWR2010} and is bounded from above as
\begin{equation}
\tr   \mc{G}_{{\sf H }, AA'}^{(2)} (\xi_{A \bar{A}} \otimes \xi_{A' \bar{A'}}) \tilde{\mc{T}}_{B \rightarrow A}^{* \otimes 2}(\mathbb{F}_{BB'}) \otimes \tilde{\mc{E}}_{R \rightarrow \bar{A}}^{* \otimes 2} (\mathbb{F}_{RR'})
\leq 
\tr \tilde{\tau}_{AB}^2 \tr \tilde{\rho}_{AR}^2,
\end{equation}
where $\tilde{\tau}_{AB}=J(\tilde{\mc{T}}_{A \rightarrow B})$ and $\tilde{\rho}_{AB}=J(\tilde{\mc{E}}_{\bar{A} \rightarrow R})$.
For the second term of the right-hand side in Eq.~\eqref{Eq:334t,}, we substitute $\xi_{A \bar{A}} = \Phi_{A \bar{A}} - I_A/d_A \otimes I_{\bar{A}}/d_A$ and obtain
\begin{equation}
\mc{C}^{(\ell)}_{AA'}(\xi_{A \bar{A}} \otimes \xi_{A' \bar{A'}})
=
\mc{C}^{(\ell)}_{AA'}(\Phi_{A \bar{A}} \otimes \Phi_{A' \bar{A'}}) - I_{A \bar{A} A' \bar{A'}}/d_A^4, \label{Eq:50}
\end{equation}
where we have used a fact that $\mc{C}^{(\ell)}$ is unital. Then, the second term of the right-hand side in Eq.~\eqref{Eq:334t,} is bounded from above by
\begin{align}
&\tr \mc{C}_{AA'}^{(\ell)}(\xi_{A \bar{A}} \otimes \xi_{A' \bar{A'}}) \tilde{\mc{T}}_{B \rightarrow A}^{* \otimes 2}(\mathbb{F}_{BB'}) \otimes \tilde{\mc{E}}_{R \rightarrow \bar{A}}^{* \otimes 2} (\mathbb{F}_{RR'}) \\
&\leq
|\!| \mc{C}_{AA'}^{(\ell)}(\xi_{A \bar{A}} \otimes \xi_{A' \bar{A'}}) \tilde{\mc{T}}_{B \rightarrow A}^{* \otimes 2}(\mathbb{F}_{BB'}) \otimes \tilde{\mc{E}}_{R \rightarrow \bar{A}}^{* \otimes 2} (\mathbb{F}_{RR'}) |\!|_1\\
&\leq
|\!| \mc{C}_{AA'}^{(\ell)}(\xi_{A \bar{A}} \otimes \xi_{A' \bar{A'}}) |\!|_1
|\!| \tilde{\mc{T}}_{B \rightarrow A}^{* \otimes 2}(\mathbb{F}_{BB'})|\!|_{\infty}|\!| \tilde{\mc{E}}_{R \rightarrow \bar{A}}^{* \otimes 2} (\mathbb{F}_{RR'}) |\!|_{\infty}\\
&\leq
d_A^2|\!| \mc{C}^{(\ell)}_{AA'}(\Phi_{A \bar{A}} \otimes \Phi_{A' \bar{A'}}) - I_{A \bar{A} A' \bar{A'}}/d_A^4 |\!|_1
\tr \tilde{\tau}_{AB}^2 \tr \tilde{\rho}_{AR}^2\\
&\leq
2d_A^2 \tr \tilde{\tau}_{AB}^2 \tr \tilde{\rho}_{AR}^2,
\end{align}
where the second inequality is obtained by the H\"{o}lder's inequality, the third one by Lemma~\ref{Lemma:} in Appendix~\ref{App:AL}, and the last one by the triangle inequality and the fact that $\mc{C}^{(\ell)}_{AA'}$ is a CPTP map.

Together with these upper bounds, we obtain from Eq.~\eqref{Eq:334t,} that
\begin{align}
\bigl( \mathbb{E}_{U_A \sim {\sf D}[\ell]}|\!| \mc{T}_{A \rightarrow B} \otimes \mc{E}_{\bar{A} \rightarrow E}(U_A \xi_{A \bar{A}} U_A^{\dagger}) |\!|_1 \bigr)^2
&\leq
\bigl(1 + (2d_A^2-1)p_{\ell} \bigr) \tr \tilde{\tau}_{AB}^2 \tr \tilde{\rho}_{AR}^2\\
&\leq
\bigl(1 + 8 d_A^{2-\ell} \bigr) \tr \tilde{\tau}_{AB}^2 \tr \tilde{\rho}_{AR}^2
\end{align}
Recalling $|\!| \mc{T}_{A \rightarrow B} \otimes \mc{E}_{\bar{A} \rightarrow E} (U_A \xi_{A \bar{A}} U_A^{\dagger}) |\!|_1=|\!| \mc{T}_{A \rightarrow B}(U_A \rho_{AR} U_A^{\dagger}) - \tau_B \otimes  \rho_R |\!|_1$
and choosing $\sigma_X$ ($X=B,R$) such that $\tr \tilde{\rho}_{AR}^2=2^{-H_2(A|R)_{\rho}}$ and $\tr \tilde{\tau}_{AB}^2 = 2^{- H_2 (A|B)_{\tau}}$,
we obtain the desired result:
\begin{equation}
\mathbb{E}_{U_A \sim {\sf D}[\ell]} |\!| \mc{T}_{A \rightarrow B}(U_A \rho_{AR} U_A^{\dagger}) - \tau_B \otimes  \rho_R |\!|_1
\leq \sqrt{1 +8 d_A^{2-\ell}} 2^{- \frac{1}{2}( H_{\rm 2} (A|B)_{\tau} +H_{\rm 2}(A|R)_{\rho})}.
\end{equation}

The statement in terms of the smooth conditional min-entropies is obtained following the same approach in Ref.~\cite{DBWR2010}. $\hfill \blacksquare$

\subsection{Proof of Theorem~\ref{Thm:CoMtorus}} \label{Proof:CM}

Let $\mc{D}[\ell]$ be a set of unitaries $\{Z_{\ell+1} \prod_{i=\ell}^{1} X_i Z_i  :W_i \in \mc{D}_{W}(d), W \in \{X,Z\}, i=1,\cdots, \ell+1 \}$ on which the random unitary $D[\ell]$ takes the value. To prove Theorem~\ref{Thm:CoMtorus}, we start with a simple observation that
$\mc{D}[\ell]$ is isomorphic to a $(2 \ell +1) d_A$-dimensional torus $\mathbb{T}^{(2 \ell +1) d_A}$.
We prove Theorem~\ref{Thm:CoMtorus} based on the concentration of measure on $\mathbb{T}^{(2 \ell +1) d_A}$ equipped with the normalized product measure $\mu$ and the normalized $\ell^1$-metric given by $d(s, t) = \frac{1}{2\pi (2 \ell +1)d_A} \sum_{i=1}^{(2 \ell +1) d_A}|s_i - t_i|$ for $s,t \in \mathbb{T}^{(2 \ell +1) d_A}$.

We again use a CP map $\mathcal{E}_{\bar{A} \rightarrow R}$, defined by $J(\mathcal{E}_{\bar{A} \rightarrow R})= \rho_{AR}$.
Let $\Delta(U_A)$ be $|\!|\mc{T}_{A \rightarrow B} \otimes \mc{E}_{\bar{A} \rightarrow R} (U_A \Phi_{A\bar{A}} U_A^{\dagger}) - \tau_B \otimes  \rho_R |\!|_1$ for $U_A \in \mc{D}[\ell]$ and $\Delta_M$ be a median of $\Delta(U_A)$. 
We denote by $\phi_U$ the point on $\mathbb{T}^{(2 \ell +1) d_A}$ corresponding to $U_A$ and define
$S$ by $\{\phi_U \in \mathbb{T}^{(2 \ell +1) d_A} : \Delta(U_A) \leq \Delta_M \}$.
We also denote by $S_r$ an $r$-neighbourhood of $S$ given by $S_r= \{ \phi \in \mathbb{T}^{(2 \ell +1) d_A}: d(\phi, S) < r \}$.
As $\Delta_M$ is the median, $S$ satisfies $\mu(S) \geq 1/2$, leading to $1 - \mu(S_r) \leq \vartheta(r)$, where $\vartheta(\eta)$ is the concentration function on $(\mathbb{T}^{(2 \ell +1) d_A}, \mu, d)$.
As $\mu$ is a normalized measure, this is equivalent to
\begin{equation}
\mu \bigl( \{ \phi \in \mathbb{T}^{(2 \ell +1) d_A} :  d(\phi, S) \geq r      \}  \bigr) \leq \vartheta(r).
\end{equation}
Using Lemma~\ref{Lemma:CoM}, we have
\begin{equation}
\mu \bigl( \{ \phi \in \mathbb{T}^{(2 \ell +1) d_A} :  d(\phi, S) \geq r      \}  \bigr) \leq 2 e^{-\frac{(2\ell +1)d_A r^2}{8}}. \label{Eq:CoMtorus}
\end{equation}
We relate this inequality with the probability on $\mc{D}[\ell]$ according to ${\sf D}[\ell]$, that $\Delta(U_A)$ deviates from its median $\Delta_M$ by more than $\eta$.
To do so, we calculate an upper bound of $|\Delta(U_A) - \Delta(V_A)|$, where $U_A, V_A \in \mc{D}[\ell]$, in terms of a distance $d(\phi_U, \phi_V)$ on $\mathbb{T}^{(2 \ell +1) d_A}$.

Defining $K$ by $\max \{|\!| \mc{T}_{A \rightarrow B} \otimes \mc{E}_{\bar{A} \rightarrow R}(X)|\!|_1 : X \in \mc{B}(\mc{H}_A^{\otimes 2}), |\!| X |\!|_1 \leq 1 \}$, we have
\begin{align}
|\Delta(U_A) - \Delta(V_A)| 
&\leq 
|\!| \mc{T}_{A \rightarrow B} \otimes \mc{E}_{\bar{A} \rightarrow R}(U_A \Phi_{A \bar{A}} U_A^{\dagger} -V_A \Phi_{A \bar{A}} V_A^{\dagger}) |\!|_1 \\
&\leq 
K |\!| U_A \Phi_{A \bar{A}} U_A^{\dagger} -V_A \Phi_{A \bar{A}} V_A^{\dagger} |\!|_1 \\
&\leq 
\sqrt{2} K |\!| U_A \Phi_{A \bar{A}} U_A^{\dagger} -V_A \Phi_{A \bar{A}} V_A^{\dagger} |\!|_2 \\
&\leq 
2 K |( U_A  -V_A )\ket{\Phi_{A \bar{A}}} |, \label{Eq:70}
\end{align}
where we used the triangular inequality in the first line, a fact that the rank of $U_A \Phi_{A \bar{A}} U_A^{\dagger} -V_A \Phi_{A \bar{A}} V_A^{\dagger}$ is at most two and $|\!|A |\!|_1 \leq \sqrt{ {\rm rank} A}|\!|A |\!|_2$ in the third line, and $|\!| \ketbra{\phi}{\phi} - \ketbra{\psi}{\psi} |\!|_2 \leq \sqrt{2}| \ket{\phi} - \ket{\psi} | $ in the last line.

As $U_A, V_A \in \mc{D}[\ell]$, we denote them by $Z^{(\ell+1)}_A \prod_{m=\ell}^{1} X^{(m)}_A Z^{(m)}_A$ and $\tilde{Z}^{(\ell+1)}_A \prod_{m=\ell}^{1} \tilde{X}^{(m)}_A \tilde{Z}^{(m)}_A$, respectively, where $W^{(m)}_A, \tilde{W}^{(m)}_A \in \mc{D}_{W, {\rm diag}}$ ($W \in \{X,Z\}$ and $m= \{1,\cdots, \ell+1 \}$). 
Using this notation and using Lemma~\ref{Lemma:chainrule}, we have
\begin{align}
|\Delta(U_A) - \Delta(V_A)| 
&\leq 
2 K \biggl( \sum_{m=1}^{\ell+1}|(  Z^{(m)}_A -  \tilde{Z}^{(m)}_A )\ket{\Phi_{A \bar{A}}} |
+ \sum_{m=1}^{\ell}|(  X^{(m)}_A -  \tilde{X}^{(m)}_A )\ket{\Phi_{A \bar{A}}} | \biggr). \label{Eq:65}
\end{align} 
Denoting $Z^{(m)}_A = {\rm diag}_Z(e^{i \zeta^{(m)}_1}, \cdots, e^{i \zeta^{(m)}_{d_A}})$ and $\tilde{Z}^{(m)}_A = {\rm diag}_Z(e^{i \tilde{\zeta}^{(m)}_1}, \cdots, e^{i \tilde{\zeta}^{(m)}_{d_A}})$, we have
\begin{align}
|(  Z^{(m)}_A -  \tilde{Z}^{(m)}_A )\ket{\Phi_{A \bar{A}}} |^2
&=
\frac{1}{d_A}  \sum_{k=1}^{d_A} |e^{i \zeta^{(m)}_k}   - e^{i \tilde{\zeta}^{(m)}_k}|^2\\
&\leq 
\frac{2}{d_A}  \sum_{k=1}^{d_A} |e^{i \zeta^{(m)}_k}   - e^{i \tilde{\zeta}^{(m)}_k}|\\
&\leq 
\frac{2}{d_A}  \sum_{k=1}^{d_A} |\zeta^{(m)}_k   - \tilde{\zeta}^{(m)}_k|,
\end{align}
for any $m$,
where 
we used facts that $x^2 \leq x$ for $x \in [-1,1]$ in the second line and that $|e^{i x} - e^{i y}| \leq |x-y|$ in the last line. Thus, we have
\begin{align}
|(  Z^{(m)}_A -  \tilde{Z}^{(m)}_A )\ket{\Phi_{A \bar{A}}} |
&\leq
\biggl( \frac{2}{d_A}  \sum_{k=1}^{d_A} | \zeta^{(m)}_k - \tilde{\zeta}^{(m)}_k  | \biggr)^{1/2}. \label{Eq:ZZZZZZZZZZZZzzzz}
\end{align}
Similarly, we have 
\begin{align}
|(  X^{(m)}_A -  \tilde{X}^{(m)}_A )\ket{\Phi_{A \bar{A}}} |
&\leq
\biggl( \frac{2}{d_A}  \sum_{k=1}^{d_A} | \chi^{(m)}_k - \tilde{\chi}^{(m)}_k  |\biggr)^{1/2}, \label{Eq:XXXXXXXXXXXXXXX}
\end{align}
for $X^{(m)}_A = {\rm diag}_X(e^{i \chi^{(m)}_1}, \cdots, e^{i \chi^{(m)}_{d_A}})$ and $\tilde{X}^{(m)}_A = {\rm diag}_X(e^{i \tilde{\chi}^{(m)}_1}, \cdots, e^{i \tilde{\chi}^{(m)}_{d_A}})$.
Substituting Eqs.~\eqref{Eq:ZZZZZZZZZZZZzzzz} and~\eqref{Eq:XXXXXXXXXXXXXXX} into Eq.~\eqref{Eq:65} and using the convexity of $\sqrt{x}$, $|\Delta(U_A) - \Delta(V_A)| $ is bounded from above as follows;
\begin{align}
|\Delta(U_A) - \Delta(V_A)|  
&\leq 
2K \sqrt{\frac{2(2\ell+1)}{d_A}} \biggl( \sum_{m=1}^{\ell+1} \sum_{k=1}^{d_A} | \zeta^{(m)}_k - \tilde{\zeta}^{(m)}_k  |  +  \sum_{m=1}^{\ell} \sum_{k=1}^{d_A} | \chi^{(m)}_k - \tilde{\chi}^{(m)}_k  |\biggr)^{1/2}\\
&\leq 
4 (2 \ell +1) K \sqrt{\pi} \sqrt{d(\phi_U, \phi_V)}. \label{Eq:71}
\end{align}

The equation~\eqref{Eq:71} implies that, to change the value of $\Delta(U_A)$ by more than $\eta$ from its median $\Delta_M$, it is necessary to change the value of $d(\phi_U, \phi_V)$ by at least $\frac{\eta^2}{2^4(2 \ell +1)^2 K^2 \pi}$.
Thus, we obtain from Eq.~\eqref{Eq:CoMtorus} that
\begin{align}
{\rm Prob}_{U_A \sim {\sf D}[\ell]}[ \Delta(U_A) - \Delta_M \geq \eta ]
&\leq 
\mu \bigl( \{ \phi \in \mathbb{T}^{(2 \ell +1) d_A} : d(\phi, S) \geq    \frac{\eta^2}{2^4(2 \ell +1)^2 K^2 \pi}   \}  \bigr) \\
&\leq 
2 \exp \bigl[ -\frac{d_A \eta^4}{2^{11} (2 \ell +1)^3 K^4 \pi^2 }    \bigr].
\end{align}
The median is replaced with the expectation value using Markov's inequality $\Delta_M \leq 2 \mathbb{E}_{U_A \sim {\sf D}[\ell]} \Delta(U_A)$, and we can use an upper bound of the expectation value given in Eq.~\eqref{Eq.epsdecDU} in terms of the conditional smooth entropies.

Finally, we investigate $K$. As $K:=\max \{|\!| \mc{T}_{A \rightarrow B} \otimes \mc{E}_{\bar{A} \rightarrow R}(X)|\!|_1 : X \in \mc{B}(\mc{H}_A \otimes \mc{H}_{\bar{A}}), |\!| X |\!|_1 \leq 1 \}$, it is bounded from above by $|\!| \mc{T}_{A \rightarrow B} \otimes \mc{E}_{\bar{A} \rightarrow R}|\!|_{1 \rightarrow 1}=\max \{|\!| \mc{T}_{A \rightarrow B} \otimes \mc{E}_{\bar{A} \rightarrow R}(\sigma_{A\bar{A}})|\!|_1 : \sigma_{A\bar{A}} \in \mc{S}(\mc{H}_A \otimes \mc{H}_{\bar{A}}) \}$.
Using the Choi-Jamio\l kowski representation $\tau_{AB} = J(\mc{T}_{A \rightarrow B})$ and $\rho_{AB} = J(\mc{E}_{\bar{A} \rightarrow R})$ (see Lemma.~\ref{Lemma:CJ}), we obtain
\begin{align}
|\!| \mc{T}_{A \rightarrow B} \otimes \mc{E}_{\bar{A} \rightarrow R}(\sigma_{A\bar{A}})|\!|_1
&=
d_A^2 |\!|\tr_{A \bar{A}} (\sigma_{A \bar{A}}^T \otimes I_{BR})(\tau_{AB} \otimes \rho_{\bar{A}R})|\!|_1\\
&= d_A^2 \tr (\sigma_{A \bar{A}}^T \otimes I_{BR})(\tau_{AB} \otimes \rho_{\bar{A}R})\\
&= d_A^2 \tr \sigma_{A \bar{A}}^T (\tau_{A} \otimes \rho_{\bar{A}})\\
&= d_A \tr \sigma_{\bar{A}}^T  \rho_{\bar{A}},
\end{align}
where we used that $\tau_A = I_A/d_A$, as $\mc{T}_{A \rightarrow B}$ is a TP map, in the last line.
This implies $K \leq d_A |\!| \rho_A |\!|_{\infty}$.
 $\hfill \blacksquare$

\section{Conclusion} \label{Conclusion}

We have investigated decoupling with $D[\ell]$, a random unitary obtained by alternately applying independent random $Z$- and $X$-diagonal unitaries $\ell$ times.
We have shown that, if $\ell \geq 2$, $D[\ell]$ achieves decoupling at the same rate as the Haar random one.
As $D[2]$ forms a $\Theta(d_A^{-2})$-approximate unitary $2$-design and cannot be better, our result implies that decoupling at the same rate as the Haar random one is achievable with rather imprecise approximations of unitary $2$-designs.
Although this construction itself is not efficient, we have explained that the same results can be obtained by efficient quantum circuits and by the dynamics generated by time-dependent chaotic Hamiltonians.
We have also provided a concentration of measure phenomena for the decoupling with $D[\ell]$ and have briefly discussed implications of our results to information theoretic protocols and relative thermalisation.

A natural open question is if worse approximations of unitary $2$-designs can achieve the same rate of decoupling.
Since we have presented that $\Theta(d_A^{-2})$-approximations can achieve decoupling, what is not clear is whether $\Omega(d_A^{-1})$-approximations could achieve decoupling. This question can be rephrased in terms of $D[\ell]$. 
Since we have shown that $D^X D^Z$ does not achieve decoupling in general (Proposition~\ref{Prop:example}) and that $D[2]=\tilde{\tilde{D}}^Z \tilde{D}^X \tilde{D}^Z D^X D^Z$ achieves decoupling well (Theorem~\ref{Thm:decDU}), it is interesting to investigate decoupling with $D[1] = \tilde{D}^Z D^X D^Z$, which is a $\Theta(d_A^{-1})$-approximate unitary $2$-design.

Another natural and theoretically interesting question is whether random unitaries strictly less uniform than unitary $2$-designs could achieve the Haar decoupling rate. This is important to figure out what properties of Haar random unitaries enable us to achieve decoupling.
One possible way to address this question is to use random quantum circuits and to analyse the relation between the length of the circuits and the decoupling rate achievable with the circuit, which is in the same spirit as Ref.~\cite{BF2013}. 
Another approach is to address the question directly in terms of unitary designs.
To this end, it may be significant to generalise the idea of unitary $t$-designs, defined so far only for a natural number $t$, to ``finer'' versions by introducing smooth parameters quantifying the uniformity of random unitaries, such as \emph{unitary $t$-designs for positive real numbers $t$}. It is not clear if such unitary designs can be defined in the way consistent with the standard definitions of designs. Nevertheless, we believe that addressing this question will lead to new developments of the theory of random unitaries.

Finally, in Ref.~\cite{NHMW2015-1} and in this paper, we have shown that diagonal unitaries, which can implement a unitary $2$-design and achieve decoupling, are potentially useful in quantum information science. 
Our constructions use quantum circuits with a constant number of commuting parts, which may lead to a simplification of the experimental realisation to some extent. However, in many tasks, a unitary $2$-design or decoupling is just one primitive of the whole process, and non-commuting gates are required in the rest of the protocols. It is hence interesting and practically useful to invent the method of re-organising the quantum circuit so that it consists of the minimal number of commuting parts.

\section{Acknowledgements}
The authors are grateful to R. F. Werner and O. Fawzi for interesting
and fruitful discussions. 
YN is a JSPS Research Fellow and is supported in part by  JSPS Postdoctoral Fellowships for Research Abroad, and by JSPS KAKENHI Grant Number 272650.
CH and CM acknowledge support from the EU grants SIQS and QFTCMPS and by the cluster of excellence EXC 201 Quantum Engineering and Space-Time Research.
AW is supported by the European Commission (STREP ``RAQUEL''), the European Research Council (Advanced Grant ``IRQUAT''), the Spanish MINECO, project FIS2008-01236, with the support of FEDER funds.
CH and AW are also supported by the Generalitat de Catalunya, CIRIT project no. 2014 SGR 966, as well as the Spanish MINECO, projects FIS2013-40627-P and FIS2016-80681-P (AEI/FEDER, UE) and CH by FPI Grant No. BES-2014-068888.

\appendix 

\section{Notation in detail} \label{App:ND}

Let $f(n)$ and $g(n)$ be functions on $\mathbb{R}^+$.
We say $f(n) = O(g(n))$ if there exist $c, n_0 >0$ such that $f(n) \leq c g(n)$ for all $n \geq n_0$. 
When there exist $c, n_0 >0$ such that $f(n) \geq c g(n)$ for all $n \geq n_0$, we say $f(n) = \Omega(g(n))$.
If $f(n) = O(g(n))$ and $f(n) = \Omega(g(n))$, we denote it by $f(n) = \Theta(g(n))$.

The 1-norm, the 2-norm and the operator norm of $X \in \mc{B}(\mc{H})$ are given by $|\!| X |\!|_1 := \tr \sqrt{X^{\dagger} X}$, $|\!| X |\!|_2 := \sqrt{ \tr X^{\dagger} X}$, and $|\!| X |\!|_{\infty} := \max_i x_i$, where $\{x_i\}$ are the singular values of $X$, respectively.
They satisfy $|\!| X |\!|_1 \geq |\!| X |\!|_2 \geq |\!| X |\!|_{\infty}$ for any $X \in \mc{B}(\mc{H})$.
The 1-norm is of particular importance in quantum information science because it provides the optimal success probability $(1+|\!| \rho - \sigma |\!|_1/2)/2$ when we would like to distinguish two quantum states $\rho$ and $\sigma$.
For a superoperator $\mc{C} : \mc{B}(\mc{H}) \rightarrow \mc{B}(\mc{H})$, the diamond norm is defined by
\begin{equation}
|\!| \mc{C} |\!|_{\diamond} := \sup_k  
\sup_{X \neq 0 }\frac{|\!|\mc{C} \otimes {\rm id}_k(X)|\!|_1}{|\!|X|\!|_1},
\end{equation}
where ${\rm id_k}$ is the identity map acting on a Hilbert space of dimension $k$. 
Similarly to the $1$-norm for quantum states, the diamond norm provides the optimal success probability to distinguish two quantum channels when we are allowed to use auxiliary systems. 

\section{Additional Lemmas} \label{App:AL}

Here, we list lemmas. The proofs are given in the referred papers or right after the lemma.

\begin{lemma}[{\bf Choi-Jamio\l kowski representation~\cite{J1972,C1975}}] \label{Lemma:CJ}
The Choi-Jamio\l kowski map $J$ is an isomorphism between a set of linear maps from $\mc{B}(\mc{H}_A)$ to $\mc{B}(\mc{H}_B)$ and $\mc{B}(\mc{H}_{AB})$.
The map $J$ is explicitly given by 
\begin{equation}
J(\mc{T}_{A \rightarrow B}) = ({\rm id}_A \otimes \mc{T}_{A' \rightarrow B})(\ketbra{\Phi}{\Phi}_{AA'}),
\end{equation}
where ${\rm id}_A$ is the identity map on $\mc{H}_A$, $\ket{\Phi}_{AA'}=\frac{1}{\sqrt{d_A}}\sum_i \ket{ii}_{AA'}$ is the maximally entangled state in $\mc{H}_{A A'}$, and $\mc{H}_{A'} \cong \mc{H}_{A}$.
The inverse map $J^{-1}$ takes an operator $X_{AB} \in \mc{B}(\mc{H}_{AB})$ to the linear map $\mc{T}_{A \rightarrow B}$ given by
\begin{equation}
\mc{T}_{A\rightarrow B}(Y_A) = d_A \tr_A (Y_A^T \otimes I_B)X_{AB}
\end{equation}
for any $Y_A \in \mc{B}(\mc{H}_A)$, where $T$ represents a transpose with respect to the Schmidt basis used in $\ket{\Phi}_{AA'}$.
\end{lemma}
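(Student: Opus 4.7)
The plan is to verify that $J$ is a linear bijection by exhibiting the proposed formula as a two-sided inverse via direct index computations. First I would observe that $J$ is manifestly linear in $\mc{T}$, since it is defined by applying $({\rm id}_A \otimes \mc{T}_{A' \to B})$ to the fixed operator $\ketbra{\Phi}{\Phi}_{AA'}$. Both the domain (linear maps $\mc{B}(\mc{H}_A) \to \mc{B}(\mc{H}_B)$) and the codomain $\mc{B}(\mc{H}_{AB})$ are complex vector spaces of dimension $d_A^2 d_B^2$, so it suffices to establish injectivity; surjectivity then follows from the dimension count. I will do so by constructing a left inverse, which simultaneously identifies the inverse with the stated formula.

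Next I would expand in the Schmidt basis so that $J(\mc{T}) = \frac{1}{d_A} \sum_{i,j} \ketbra{i}{j}_A \otimes \mc{T}(\ketbra{i}{j})$. Plugging this into the candidate inverse, the claim $\mc{T}(Y_A) = d_A \tr_A (Y_A^T \otimes I_B) J(\mc{T})$ reduces, after pulling the partial trace inside the sum, to the identity
\begin{equation}
\sum_{i,j} \bra{j} Y^T \ket{i}\, \mc{T}(\ketbra{i}{j}) = \sum_{i,j} Y_{ij}\, \mc{T}(\ketbra{i}{j}) = \mc{T}\Bigl(\sum_{i,j} Y_{ij} \ketbra{i}{j}\Bigr) = \mc{T}(Y),
\end{equation}
where I used $(Y^T)_{ji}=Y_{ij}$ together with the linearity of $\mc{T}$. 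This yields $J^{-1} \circ J = {\rm id}$ on the space of linear maps, and combined with the dimension count, establishes that $J$ is an isomorphism with the stated inverse.

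For completeness I would also verify $J \circ J^{-1} = {\rm id}$ on $\mc{B}(\mc{H}_{AB})$ by expanding a general operator as $X_{AB} = \sum_{i,j,m,n} X_{im,jn}\, \ketbra{i}{j}_A \otimes \ketbra{m}{n}_B$, substituting into the formula for $\mc{T}$ to read off $\mc{T}(\ketbra{i}{j}_{A'}) = \sum_{m,n} X_{im,jn} \ketbra{m}{n}_B$, and observing that feeding this back into $J$ precisely reassembles $X_{AB}$ in the product basis. This step is technically redundant after the left-inverse argument but makes the explicit form of $J^{-1}$ transparent.

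The whole argument is index bookkeeping; the only place where one must be careful is the appearance of the transpose in the inverse formula. It arises because $\ketbra{\Phi}{\Phi}_{AA'}$ enforces a contraction $\sum_i \ket{i}_A \otimes \ket{i}_{A'}$ which, when paired with $Y_A^T \otimes I_B$ under $\tr_A$, swaps the row and column indices of $Y$ before they are fed into $\mc{T}$. Since the lemma fixes the Schmidt basis throughout, this convention is unambiguous, and no analytic estimates or topological considerations are required.
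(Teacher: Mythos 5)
Your argument is correct and complete: linearity of $J$, the left-inverse computation $d_A \tr_A\bigl[(Y^T \otimes I_B)J(\mc{T})\bigr] = \sum_{i,j} Y_{ij}\,\mc{T}(\ketbra{i}{j}) = \mc{T}(Y)$, and the dimension count $d_A^2 d_B^2$ on both sides together establish the isomorphism and identify the inverse. The paper itself does not prove this lemma --- it is quoted from Jamio\l kowski and Choi --- so there is no in-paper proof to compare against; your direct index verification is the standard argument for this standard fact. One small slip in your (explicitly redundant) consistency check of $J \circ J^{-1} = {\rm id}$: reading off the map from $X_{AB}$ via the inverse formula gives $\mc{T}(\ketbra{i}{j}) = d_A \sum_{m,n} X_{im,jn}\ketbra{m}{n}_B$, not $\sum_{m,n} X_{im,jn}\ketbra{m}{n}_B$; the factor $d_A$ is then cancelled by the $1/d_A$ coming from $\ketbra{\Phi}{\Phi}_{AA'}$ when you feed it back into $J$, so the reassembly of $X_{AB}$ works out, but as written your intermediate formula and your conclusion are off by that factor. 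Since the left-inverse argument plus the dimension count already force the stated formula to be the two-sided inverse, this does not affect the validity of the proof.
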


%\begin{lemma}[Lemma 3.7 in Ref.~\cite{RennerThesis}] \label{Lemma:TraceTwo}
%Let $H$ be an Hermitian operator on $\mathcal{H}$, then, for any $\sigma \in  \mc{P}(\mathcal{H})$, it follows that $|\!| H |\!|_1 \leq \sqrt{ \tr{\sigma} } |\!|\sigma^{-1/4} H \sigma^{-1/4} |\!|_2$.
%\end{lemma}

\begin{lemma}[Swap trick (see e.g. Ref.~\cite{L2010})] \label{Lemma:SwapTrick}
Let $X, Y$ in $\mc{B}(\mc{H})$. Then, $\tr (X \otimes Y) \mathbb{F} = \tr XY$.
\end{lemma}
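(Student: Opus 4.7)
The plan is to prove the swap-trick identity by expanding the swap operator in a product basis and then exploiting the multiplicativity of the trace on tensor products. Concretely, I would first fix an orthonormal basis $\{\ket{i}\}_{i=1}^d$ of $\mc{H}$ and write
\[
\mathbb{F} \;=\; \sum_{i,j} \ketbra{i}{j} \otimes \ketbra{j}{i},
\]
which one verifies directly by checking its action on the product basis: $\mathbb{F}\ket{m}\otimes\ket{n} = \ket{n}\otimes\ket{m}$.

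Next, I would compute the trace by distributing it over the tensor factors. Since $\tr[(A\otimes B)(C\otimes D)] = \tr(AC)\tr(BD)$, we have
\[
\tr\bigl[(X \otimes Y)\,\mathbb{F}\bigr]
\;=\; \sum_{i,j} \tr\!\bigl(X\ketbra{i}{j}\bigr)\,\tr\!\bigl(Y\ketbra{j}{i}\bigr)
\;=\; \sum_{i,j} \bra{j}X\ket{i}\,\bra{i}Y\ket{j}.
\]
Collapsing the inner sum over $i$ using completeness, $\sum_i \ketbra{i}{i} = I$, yields
\[
\sum_j \bra{j} X Y \ket{j} \;=\; \tr(XY),
\]
which is the desired identity.

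There is no real obstacle here; the only thing to be careful about is the choice of basis convention for $\mathbb{F}$, since the paper defines $\mathbb{F}_{WW'} = \sum_{i,j}\ketbra{ij}{ji}_{WW'}$, and one must recognize that this operator coincides with $\sum_{i,j}\ketbra{i}{j}\otimes\ketbra{j}{i}$ under the natural identification $\ket{ij} = \ket{i}\otimes\ket{j}$. Everything else is a one-line computation, and the result is independent of the chosen basis since both sides of the identity are manifestly basis-independent.
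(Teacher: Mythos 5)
Your proof is correct and is the standard argument for this identity; the paper itself gives no proof (it simply cites Ref.~\cite{L2010}), so your direct computation---expanding $\mathbb{F}=\sum_{i,j}\ketbra{i}{j}\otimes\ketbra{j}{i}$, factorising the trace over tensor factors, and resolving the identity---is exactly the expected argument. Your remark that the paper's convention $\mathbb{F}_{WW'}=\sum_{i,j}\ketbra{ij}{ji}_{WW'}$ coincides with this operator under $\ket{ij}=\ket{i}\otimes\ket{j}$ is also accurate, so nothing is missing.
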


\begin{lemma}[Ref.~\cite{SDTR2013}] \label{Lemma:}
For a linear map $\mc{C}_{A \rightarrow B}$ from $\mc{B}(\mc{H}_A)$ to $\mc{B}(\mc{H}_B)$, its adjoint map $\mc{C}_{B \rightarrow A}^*$ in terms of the Hilbert-Schmidt inner product, defined by $\tr \mc{C}_{A \rightarrow B}(X_A) Y_B = \tr X_A \mc{C}_{B \rightarrow A}^*(Y_B)$, satisfies
$|\!| \mc{C}_{B \rightarrow A}^{* \otimes 2}(\mathbb{F}_{BB'}) |\!|_{\infty} \leq d_A \tr J(\mc{C}_{A \rightarrow B})^2$.
\end{lemma}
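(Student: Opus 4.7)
The plan is to reduce the operator--norm bound to an explicit calculation on the Kraus data of $\mc{C}$, factor out $\mathbb{F}_{AA'}$, and then relate the remaining expression to the Hilbert--Schmidt norm of the Choi operator $J(\mc{C})$. Writing $\mc{C}$ in Kraus form $\mc{C}(X)=\sum_k K_k X K_k^\dagger$, the adjoint is $\mc{C}^*(Y)=\sum_k K_k^\dagger Y K_k$. Expanding $\mathbb{F}_{BB'}=\sum_{ij}|i\rangle\langle j|\otimes|j\rangle\langle i|$ and applying $\mc{C}^{*\otimes 2}$, one obtains, after a short index calculation that recognises a swap structure on $\mc{H}_A\otimes\mc{H}_{A'}$,
\begin{equation}
\mc{C}^{*\otimes 2}(\mathbb{F}_{BB'}) = L\,\mathbb{F}_{AA'},\qquad L:=\sum_{kl}(K_k^\dagger K_l)\otimes(K_l^\dagger K_k).
\end{equation}
Since $\mathbb{F}_{AA'}$ is unitary, $|\!|\mc{C}^{*\otimes 2}(\mathbb{F}_{BB'})|\!|_\infty=|\!|L|\!|_\infty$, so the whole problem reduces to bounding $|\!|L|\!|_\infty$.

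\paragraph{Reformulation via Stinespring and a trace identity.} Let $\mathbf{K}:\mc{H}_A\to\mc{H}_B\otimes\mc{H}_E$ be the Stinespring isometry $\mathbf{K}=\sum_k K_k\otimes|k\rangle_E$. Using the identity $\mathbf{K}^\dagger(I_B\otimes X_E)\mathbf{K}=\sum_{kl}X_{kl}\,K_k^\dagger K_l$, one verifies
\begin{equation}
L = \mathbf{K}^\dagger\bigl(I_B\otimes C_E^T\bigr)\mathbf{K}, \qquad C_E^T:=\tr_B(\mathbf{K}\mathbf{K}^\dagger)=\sum_{kl}\tr(K_k K_l^\dagger)\,|k\rangle\langle l|_E,
\end{equation}
and the Gram--matrix trace identity gives
\begin{equation}
\tr L=\sum_{kl}|\tr(K_k^\dagger K_l)|^2=d_A^2\tr J(\mc{C})^2.
\end{equation}
The target inequality is therefore the refinement $|\!|L|\!|_\infty\leq \tr(L)/d_A$, i.e.\ one must show that the largest eigenvalue of $L$ does not exceed the average eigenvalue times the ambient dimension $d_A$.

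\paragraph{Closing the gap.} A direct H\"older estimate on the Stinespring expression gives only $|\!|L|\!|_\infty\leq|\!|C_E^T|\!|_\infty$, which (via the purification identification $C_E^T=d_A\rho_E$ with $\rho_E$ the environment marginal of the canonical purification of $J(\mc{C})$) yields $d_A|\!|J(\mc{C})|\!|_\infty$---weaker than the claim by a factor $\|\tau\|_\infty/\tr\tau^2$. To save this factor, I would work variationally: for any unit $|\Psi\rangle\in\mc{H}_A\otimes\mc{H}_{A'}$ with Schmidt decomposition $|\Psi\rangle=\sum_m\sqrt{p_m}|u_m\rangle|v_m\rangle$, the swap trick rewrites
\begin{equation}
\langle\Psi|\mc{C}^{*\otimes 2}(\mathbb{F}_{BB'})|\Psi\rangle=\sum_{m,m'}\sqrt{p_m p_{m'}}\,\tr\bigl[\mc{C}(|u_{m'}\rangle\langle u_m|)\,\mc{C}(|v_{m'}\rangle\langle v_m|)\bigr],
\end{equation}
and two successive Cauchy--Schwarz applications---first in the $(m,m')$ index against the weight $\sqrt{p_m p_{m'}}$ (using $\sum p_m=1$), then distributing $\sqrt{p_m}$ and $\sqrt{p_{m'}}$ separately onto the two factors via $|\tr(XY)|\leq|\!|X|\!|_2|\!|Y|\!|_2$---collapse each factor into a weighted sum $\sum_m p_m\langle u_m|T|u_m\rangle\leq|\!|T|\!|_\infty$ of the PSD operator $T=\sum_{k'l'}\tr(K_{l'}^\dagger K_{k'})\,K_{k'}^\dagger K_{l'}$, whose basis-independent HS structure gives the factor $d_A\tr J(\mc{C})^2$ without the extra $d_A$.

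\paragraph{Main obstacle.} The hard part is precisely this last refinement: a na\"ive Cauchy--Schwarz loses a full factor of $d_A$, and one must exploit simultaneously (i) the probability normalisation $\sum_m p_m=1$ from the Schmidt decomposition, (ii) the basis--independence of $\|\mc{C}\|_{HS}^2=d_A^2\tr J(\mc{C})^2$, and (iii) the Hermitian swap-symmetry $K_l^\dagger K_k=(K_k^\dagger K_l)^\dagger$ that makes $T$ positive. Getting these three ingredients to cooperate is the technical core; once they do, the estimate above delivers $|\!|L|\!|_\infty\leq d_A\tr J(\mc{C})^2$ and the lemma follows.
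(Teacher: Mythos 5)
Your reduction is sound as far as it goes: for a CP map with Kraus operators $K_k$ one indeed has $\mc{C}^{*\otimes 2}(\mathbb{F}_{BB'})=L\,\mathbb{F}_{AA'}$ with $L=\sum_{kl}(K_k^{\dagger}K_l)\otimes(K_l^{\dagger}K_k)$, hence $|\!|\mc{C}^{*\otimes2}(\mathbb{F}_{BB'})|\!|_{\infty}=|\!|L|\!|_{\infty}$, and $\tr L=\sum_{kl}|\tr(K_k^{\dagger}K_l)|^2=d_A^2\,\tr J(\mc{C})^2$; likewise the Schmidt-decomposition identity and the double Cauchy--Schwarz step (which, note, already uses complete positivity and the Hermiticity of $\mc{C}^{*\otimes2}(\mathbb{F}_{BB'})$, whereas the lemma is stated for arbitrary linear maps) legitimately give $|\!|\mc{C}^{*\otimes2}(\mathbb{F}_{BB'})|\!|_{\infty}\leq|\!|T|\!|_{\infty}$ with $T=\sum_{kl}\tr(K_l^{\dagger}K_k)\,K_k^{\dagger}K_l\geq 0$. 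Since $|\!|T|\!|_{\infty}\leq\tr T$, what your argument actually establishes is $|\!|\mc{C}^{*\otimes2}(\mathbb{F}_{BB'})|\!|_{\infty}\leq d_A^{2}\,\tr J(\mc{C})^2$.

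The step you defer to the end --- the claim $|\!|T|\!|_{\infty}\leq\tr T/d_A$, equivalently $|\!|L|\!|_{\infty}\leq\tr L/d_A$ --- is exactly where the proof has a genuine gap: you assert that the ``three ingredients'' will deliver the extra factor $1/d_A$, but no argument is given, and none can be given at this level of generality, because the inequality (and the lemma as literally stated, with the normalised Choi operator of Lemma~\ref{Lemma:CJ}) is false. Take $\mc{C}(X)=\ketbra{0}{0}X\ketbra{0}{0}$ with $\mc{H}_B\cong\mc{H}_A$: then $\mc{C}^{*}=\mc{C}$, $\mc{C}^{*\otimes2}(\mathbb{F}_{BB'})=\ketbra{00}{00}$ has operator norm $1$, while $J(\mc{C})=\frac{1}{d_A}\ketbra{00}{00}$ gives $d_A\tr J(\mc{C})^2=1/d_A$; in your notation $T=\ketbra{0}{0}$, so $|\!|T|\!|_{\infty}=\tr T$ and the hoped-for $1/d_A$ saving is maximally violated. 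Even trace preservation does not rescue the claim: the channel that leaves $\ketbra{0}{0}$ invariant and maps the orthogonal subspace to the maximally mixed state satisfies $|\!|\mc{C}^{*\otimes2}(\mathbb{F}_{BB'})|\!|_{\infty}\geq\tr[\mc{C}(\ketbra{0}{0})^2]=1$ but $d_A\tr J(\mc{C})^2<2/d_A$. So the missing factor is not a refinement to be squeezed out of Cauchy--Schwarz; the bound provable by your route is the $d_A^{2}$ version, which is also the one consistent with the coefficient $4\delta d_A^4$ quoted in Eq.~\eqref{Eq:dec2des}. Before attempting to close the gap you should go back to Ref.~\cite{SDTR2013} and check the exact statement, hypotheses and normalisation of the lemma being cited, because the version with a single factor of $d_A$, as reproduced here for general linear (or even CPTP) maps, cannot be proved.
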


Finally, we use a simple inequality for the vector norm. 

\begin{lemma} \label{Lemma:chainrule}
Let $\ket{\Phi_{AA'}}$ be a vector corresponding to the maximally entangled state in $\mc{H}_{AA'}$. For unitaries $U_A, U'_A, V_A, V'_A$ acting on $\mc{H}_A$, it holds that
\begin{equation}
|(U_A U'_A - V_A V'_A) \ket{\Phi_{AA'}} |  \leq |(U_A - V_A) \ket{\Phi_{AA'}} | +|(U'_A - V'_A) \ket{\Phi_{AA'}} |.
\end{equation}
\end{lemma}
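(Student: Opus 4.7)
The plan is to apply the standard ``add and subtract'' trick followed by the triangle inequality, and then invoke two norm-preservation facts particular to the maximally entangled state.

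First I would write
\begin{equation}
U_A U'_A - V_A V'_A = (U_A - V_A) U'_A + V_A (U'_A - V'_A),
\end{equation}
and apply the triangle inequality to the vector $(U_A U'_A - V_A V'_A)\ket{\Phi_{AA'}}$, obtaining two terms. The term $|V_A (U'_A - V'_A)\ket{\Phi_{AA'}}|$ is immediately handled by unitary invariance of the Euclidean norm: $V_A \otimes I_{A'}$ is unitary, so this equals $|(U'_A - V'_A)\ket{\Phi_{AA'}}|$ as required.

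The remaining step, and the only nontrivial point, is to show that the factor $U'_A$ in $|(U_A - V_A) U'_A \ket{\Phi_{AA'}}|$ can be removed. For this I would use the ``ricochet'' property of the maximally entangled state: for any operator $M_A$, $(M_A \otimes I_{A'})\ket{\Phi_{AA'}} = (I_A \otimes M_A^T)\ket{\Phi_{AA'}}$, which is a direct consequence of the definition $\ket{\Phi_{AA'}} = \frac{1}{\sqrt{d_A}}\sum_i \ket{ii}$. Applying this to $M_A = U'_A$, I can rewrite
\begin{equation}
(U_A - V_A) U'_A \otimes I_{A'}\ket{\Phi_{AA'}} = (U_A - V_A)\otimes I_{A'} \cdot I_A \otimes (U'_A)^T \ket{\Phi_{AA'}}.
\end{equation}
Since operators on $A$ and on $A'$ commute, I can then move $I_A \otimes (U'_A)^T$ to the outer left, and since $(U'_A)^T$ is unitary this factor is norm-preserving. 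Hence
\begin{equation}
|(U_A - V_A) U'_A \ket{\Phi_{AA'}}| = |(U_A - V_A)\ket{\Phi_{AA'}}|,
\end{equation}
and combining with the first half yields the claimed inequality.

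There is really no substantial obstacle; the only content is the ricochet identity, which is a one-line computation. The proof is two lines of algebra plus one triangle-inequality application.
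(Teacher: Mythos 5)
Your proof is correct and follows essentially the same route as the paper's: the same add-and-subtract decomposition, the triangle inequality, unitary invariance for the second term, and the ricochet/transpose identity $W_A\ket{\Phi_{AA'}} = W_{A'}^T\ket{\Phi_{AA'}}$ to strip the unitary off the first term.
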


\begin{proof}
Using the triangular inequality, we have
\begin{equation}
|(U_A U'_A - V_A V'_A) \ket{\Phi_{AA'}} |  \leq |(U_A - V_A) U'_A \ket{\Phi_{AA'}} | + |V_A(U'_A - V'_A) \ket{\Phi_{AA'}} |.
\end{equation}
As the norm is invariant under a unitary transformation, the second term on the right-hand side is simply equal to $|(U'_A - V'_A) \ket{\Phi_{AA'}} |$.
Using the property of the maximally entangled state such that $W_A \ket{\Phi_{AA'}} = W_{A'}^T \ket{\Phi_{AA'}}$, where $T$ is the transpose in the Schmidt basis of $\ket{\Phi_{AA'}}$ in $\mc{H}_{A'}$,
the first term on the right-hand side is given by
$|(U_A - V_A) (U'_{A'})^T \ket{\Phi_{AA'}} |$, which is also equal to
$| (U'_{A'})^T (U_A - V_A) \ket{\Phi_{AA'}} | = | (U_A - V_A) \ket{\Phi_{AA'}} |$. 
\end{proof}

\section{Proof of Proposition~\ref{Prop:example}} \label{Ex:Proof}

As $|\!| X |\!|_1 \geq |\!| X |\!|_2$ for any $X \in \mc{B}(\mc{H})$, we obtain
\begin{align}
\mathbb{E}_{U_A \sim {\sf D}_X \times  {\sf D}_Z}|\!| \mc{T}_{A \rightarrow A_1} (U_A  \rho_{AR}  U_A^{\dagger} )- \tau_{A_1} \otimes  \rho_R |\!|_1
&\geq 
\mathbb{E}|\!| \mc{T}_{A \rightarrow A_1} (U_A  \rho_{AR}  U_A^{\dagger} )- \tau_{A_1} \otimes  \rho_R |\!|_2\\
&=\mathbb{E} \sqrt{ \tr \bigl( \mc{T}_{A \rightarrow A_1} (U_A  \rho_{AR}  U_A^{\dagger} )- \tau_{A_1} \otimes  \rho_R \bigr)^2}\\
&\geq 
\frac{1}{\sqrt{2}}
\mathbb{E}  \tr \bigl( \mc{T}_{A \rightarrow A_1} (U_A  \rho_{AR}  U_A^{\dagger} )- \tau_{A_1} \otimes  \rho_R \bigr)^2
\end{align}
where we used an inequality $\sqrt{x} \leq x$ for $0 \leq x \leq 1$ and a fact that $\tr (\rho - \sigma)^2 \leq 2$ for any $\rho, \sigma \in \mc{S}(\mc{H})$.
Substituting $\tau_{A_1} = \rho_R=I_{A_1}/d_{A_1}$ and using a swap trick (Lemma~\ref{Lemma:SwapTrick}) and $\mathbb{E}_{U_A \sim {\sf D}_X \times {\sf D}_Z} [U_A  \rho_{AR}  U_A^{\dagger}] = I_{A}/d_A \otimes \rho_R$, we obtain
\begin{align}
\mathbb{E}_{U_A \sim {\sf D}_X \times {\sf D}_Z}|\!| \mc{T}_{A \rightarrow A_1} (U_A  \rho_{AR}  U_A^{\dagger} )- \tau_{A_2} \otimes  \rho_R |\!|_1
&\geq
\frac{1}{\sqrt{2}}\biggl(
\mathbb{E} \bigl[\tr  \bigl( \mc{T}_{A \rightarrow A_1} (U_A  \rho_{AR}  U_A^{\dagger} ) \bigr)^2 \bigr]	 - 1/d_{A_1}^2 \biggr). \label{Eq:last}
\end{align}
Using a swap trick given in Lemma~\ref{Lemma:SwapTrick} and linearity of $\mc{T}_{A \rightarrow A_1}$, we have
\begin{align}
\mathbb{E}_{U_A \sim {\sf D}_X \times {\sf D}_Z} [\tr \bigl( \mc{T}_{A \rightarrow A_1} (U_A  \rho_{AR}  U_A^{\dagger} ) \bigr)^2]
%&=\mathbb{E}\tr \bigl( \mc{T}_{A \rightarrow A_1} (U_A  \rho_{AR}  U_A^{\dagger} ) \bigr)^{\otimes 2} \mathbb{F}_{A_1 A_1'} \otimes \mathbb{F}_{RR'}\\
&=\tr \mc{T}_{A \rightarrow A_1}^{\otimes 2} \bigl( \mathbb{E} \bigl[ (U_A  \rho_{AR}  U_A^{\dagger} )^{\otimes 2} \bigr]  \bigr) \mathbb{F}_{A_1 A_1'} \otimes \mathbb{F}_{RR'}. \label{Eq:12345}
\end{align}

We then directly compute the operator $\mathbb{E} (U_A  \rho_{AR}  U_A^{\dagger} )^{\otimes 2} $ by substituting $\rho_{AR}=\Phi_{A_1 R} \otimes \ketbra{0}{0}_{A_2}$ and using relations such as 
$\mathbb{E} \bigl[ (D^Z_A  O D_A^{Z \dagger} )^{\otimes 2} \bigr]=\sum_{i,j} (O^{ij}_{ij} \ketbra{ij}{ij} + O^{ij}_{ji} \ketbra{ij}{ji} - O^{ii}_{ii}\ketbra{ii}{ii})$ for any $O=\sum_{i,j,k,l}\ketbra{ij}{kl} \in \mc{B}(\mc{H}^{\otimes 2})$, where $\{\ket{i}\}$ is the Pauli-$Z$ basis.
Noting that $U_A = D^X D^Z$ and $\mc{T}_{A \rightarrow A_1} = \tr_{A_2}$, we obtain after some calculation that
\begin{multline}
\mc{T}_{A \rightarrow A_1}^{\otimes 2}  \bigl(\mathbb{E}_{U_A \sim {\sf D}_X \otimes {\sf D}_Z} \bigl[ (U_A  \rho_{AR}  U_A^{\dagger} )^{\otimes 2} ] \bigr)
=
\frac{1}{d_{A_2} d_{A_1}^4}
\biggl[
d_{A_2} I_{A_1 R}^{\otimes 2}
+
\mathbb{F}_{A_1 A_1'} \otimes \mathbb{F}_{RR'}
+
\mathbb{L}^X_{A_1 A_1'} \otimes \mathbb{L}^Z_{R R'}\\
-
 \mathbb{L}^X_{A_1 A_1'} \otimes (I_{RR'}+\mathbb{F}_{RR'} )
-
d_{A_2} I_{A_1}^{\otimes 2} \otimes \mathbb{L}^X_{RR'}
-
\mathbb{F}_{A_1 A_1'} \otimes \mathbb{L}^Z_{RR'}\\
+
\sum_{i,j=1}^{d_{A_1}} \sum_{\alpha, \beta=1}^{d_{A_1}}
i_{\alpha}i_{\beta}j_{\alpha}j_{\beta}
\bigl(
d_{A_2} \ketbra{\alpha \beta}{\alpha \beta}_{A_1A_1'} \otimes \ketbra{ij}{ji}_{RR'}
+ \ketbra{\alpha \beta}{\beta \alpha }_{A_1A_1'} \otimes \ketbra{ij}{ij}_{RR'}
\bigr)
\biggr],
\end{multline}
where $\mathbb{L}^Z = \sum_i \ketbra{ii}{ii}$, $\mathbb{L}^X$ is the Hadamard conjugation of $\mathbb{L}^Z$, $\{ \ket{\alpha} \}$ is the Pauli-$X$ basis, and $i_{\alpha} = \braket{i}{\alpha}$.
Substituting this into Eq.~\eqref{Eq:12345}, and using $\mc{H}_R \cong \mc{H}_{A_1}$ and relations $\tr \mathbb{F}_{A_1A_1'}=\tr \mathbb{L}_{A_1A_1'}^W \mathbb{F}_{A_1A_1'} = d_{A_1}$ for $W=X,Z$, we obtain
\begin{equation}
\mathbb{E}_{U_A \sim {\sf D}_X \otimes {\sf D}_Z} [\tr \bigl( \mc{T}_{A \rightarrow A_1} (U_A  \rho_{AR}  U_A^{\dagger} ) \bigr)^2]
=
\frac{1}{d_{A_1}} + \frac{1}{d_{A_2}}- \frac{1}{d_{A}}.
\end{equation}
Substituting this into Eq.~\eqref{Eq:last}, we conclude the proof. 

>From Theorem~\ref{Thm:decs}, it is straightforward to calculate the decoupling rate with a Haar random unitary by taking $\epsilon \rightarrow 0$.
$\hfill \blacksquare$

\section{Applications in detail} \label{App:decApp}

In this section, we provide more detailed discussions about applications of decoupling theorems given in Section~\ref{MainResult}.

\subsection{Information theoretic protocols} \label{Appp:ITP}
Here, we consider the coherent state merging protocol in a one-shot setting~\cite{ADHW2009, B2009, DH2011}. Out of all information theoretic protocols this one is of particular interest, since many others can be derived from it as special cases. The task here is to transfer some part $A$ of a tripartite quantum state $\Psi^{ABR}$ from one party to an other which might already possess a part $B$ of the state, while leaving the reference $R$ unchanged. At the same time both parties try to generate as much shared entanglement as possible. In a more precise way the protocol is defined as follows. 
\begin{definition}
Consider a pure tripartite quantum state $\Psi^{ABR}$ where $A=A_1A_2$ and $B=B_1B_2$. Now, coherent state merging with error $\epsilon>0$ is a LOCC process $\mathcal M^{AB \rightarrow A_1B_1\tilde B BR}$, with
\begin{equation}
|\!| \mathcal M^{AB \rightarrow A_1B_1\tilde B BR}(\Psi^{ABR}) - \Phi^{A_1B_1} \otimes \Psi^{\tilde BBR} |\!|_1 \leq \epsilon, 
\end{equation}
where $\Phi^{A_1B_1}$ is a maximally entangled state and $\Psi^{\tilde BBR} = {\rm id}^{A\rightarrow \tilde B} \Psi^{ABR}$. The entanglement gain is now given as $e_\epsilon^{(1)} = \log |A_1|$ and the communication cost by $q_\epsilon^{(1)} = \log |A_2|$. 
\end{definition}

In the general one-shot setting, the optimal rates for coherent state merging have been shown \cite{DH2011} to be 
\begin{align}
e_\epsilon^{(1)} &\geq \frac{1}{2}[H_{\rm min}^{\delta}(A|R)_\Psi +H^\delta_0(A)_\Psi]
+ \log(\delta' ) \\
q_\epsilon^{(1)} &\leq \frac{1}{2}[-H_{\rm min}^{\delta}(A|R)_\Psi +H^\delta_0(A)_\Psi]
- \log(\delta')
\end{align}
for $0 < \epsilon\leq 1$ and $\delta >0$ such that $\epsilon = 2\sqrt{5\delta'} + 2\sqrt{\delta}$, $\delta' = \delta + \sqrt{4\sqrt{\delta}-4\delta}$ and $H_0(A)_\rho = \log{\tr{\Pi_{\rho^A}}}$,where $\Pi_{\rho^A}$ denotes the projector onto the support of $\rho^A$. Here, we use the term \emph{optimal} as done in Ref.~\cite{DH2011}, where it was shown that these rates indeed converge to the optimal ones in the limit of many identical copies of the state $\Psi$. Furthermore the same work also gave a converse bound in the one-shot case. In Ref.~\cite{DH2011} the achievability of these rates was shown using a decoupling theorem with a Haar random unitary, implying that there always exists at least one that can be used for sufficiently precise decoding. 
It is natural to ask whether similar rates can be achieved using simpler unitaries.
This was, for example, investigated in Ref.~\cite{HM14} where the optimal rates for coherent state merging were calculated using a decoupling theorem based on an approximate unitary $2$-design. 

Our result offers the opportunity to implement the encoding for coherent state merging using a unitary in $\mc{D}[\ell]$, with achievable rates depending on the number of repetitions. To state these rates we note that we use the following corollary, obtained by rewriting Eq.~\eqref{Eq.epsdecDU}, when the CPTP map is given by the partial trace over a subsystem of $A$.

\begin{corollary}[Smooth decoupling with $\mathbf{D[\ell]}$ for the partial trace] \label{Thm:decPT2}
Let a system $A$ be composed of two subsystems $A_1$ and $A_2$, of which Hilbert spaces are $\mc{H}_{A_1}$ and $\mc{H}_{A_2}$ with dimension $d_{A_1}$ and $d_{A_2}$, respectively. 
While 
\begin{equation}
\log{d_{A_1}}\leq \frac{1}{2}( H^{\epsilon}_{\rm min}(A|R)_{\rho} +\log{d_A})+\log \epsilon + \log(1+8 d_A^{2-\ell}) , \label{Eq:23m'o2}
\end{equation} 
the following holds
\begin{equation}
\mathbb{E}_{U_A \sim {\sf D}[\ell]}|\!| \tr_{A_2} (U_A \rho_{AR} U_A^{\dagger}) - \tau_{A_1} \otimes  \rho_R |\!|_1 \leq 9\epsilon
\end{equation}
to the leading order.
\end{corollary}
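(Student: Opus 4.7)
The plan is to derive the corollary as a direct specialisation of Theorem~\ref{Thm:decDU} to the particular CPTP map $\mc{T}_{A \to B} = \tr_{A_2}$, with the output system $B$ identified with (a copy of) $A_1$. The only real work is to evaluate the second entropic term $H^{\epsilon}_{\min}(A|B)_\tau$ in the general decoupling bound, since the first term $H^\epsilon_{\min}(A|R)_\rho$ is kept as in the hypothesis. Once that term is computed, everything reduces to matching coefficients and invoking the assumed inequality \eqref{Eq:23m'o2}.

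First, I would compute $\tau_B$ and $\tau_{AB}$ from Lemma~\ref{Lemma:CJ}. A straightforward expansion gives $\tau_B = \tr_A J(\tr_{A_2}) = I_{A_1}/d_{A_1}$, matching the statement, and
\begin{equation}
\tau_{AB} \;=\; J(\tr_{A_2}) \;=\; \frac{1}{d_{A_2}}\, I_{A_2} \otimes \ketbra{\Phi^+}{\Phi^+}_{A_1 B},
\end{equation}
where $\ket{\Phi^+}_{A_1 B}$ is maximally entangled. Next I would lower-bound $H^{\epsilon}_{\min}(A|B)_\tau \ge H_{\min}(A|B)_\tau$ by choosing the candidate $\sigma_B = I_B/d_{A_1}$ in the sup defining the min-entropy. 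Writing $I_{A_1 B} = \Phi^+_{A_1 B} + (I_{A_1 B} - \Phi^+_{A_1 B})$ and comparing coefficients on the two orthogonal subspaces, the inequality $2^{-\lambda} I_A \otimes \sigma_B - \tau_{AB} \ge 0$ reduces to $2^{-\lambda}/d_{A_1} \ge 1/d_{A_2}$, so the optimal $\lambda$ is $\log d_{A_2} - \log d_{A_1} = \log d_A - 2\log d_{A_1}$. Hence
\begin{equation}
H^{\epsilon}_{\min}(A|B)_\tau \;\ge\; \log d_A \,-\, 2\log d_{A_1}.
\end{equation}

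Substituting this bound into \eqref{Eq.epsdecDU} of Theorem~\ref{Thm:decDU}, the exponent becomes $-\tfrac{1}{2}\bigl(H^{\epsilon}_{\min}(A|R)_\rho + \log d_A\bigr) + \log d_{A_1}$, giving
\begin{equation}
\mathbb{E}_{U_A \sim {\sf D}[\ell]} \|\tr_{A_2}(U_A \rho_{AR} U_A^\dagger) - \tau_{A_1} \otimes \rho_R\|_1
\;\le\; \sqrt{1+8d_A^{2-\ell}}\; d_{A_1}\; 2^{-\frac{1}{2}(H^{\epsilon}_{\min}(A|R)_\rho + \log d_A)} + 12\epsilon.
\end{equation}
Finally, I would take logarithms and invoke the hypothesis \eqref{Eq:23m'o2}, which rearranged bounds the log of the first term on the right-hand side by $\log\epsilon$ (plus lower-order corrections in $d_A^{2-\ell}$ that are negligible for $\ell \ge 2$ and large $d_A$). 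This makes the first summand at most $\epsilon$ to leading order, so the total is bounded by $13\epsilon$, which the authors round to $9\epsilon$ by a mildly sharper choice of smoothing parameter (or by absorbing the lower-order terms in the ``leading order'' caveat).

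There is no real obstacle here; the only subtle step is the min-entropy computation for $\tau_{AB}$, where one has to resist trying to use a duality/pure-state identity (since $\tau_{AB}$ is mixed) and instead verify the operator inequality directly on the two-dimensional block structure produced by the projector $\Phi^+_{A_1 B}$. The rest is bookkeeping.
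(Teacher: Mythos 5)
Your approach is correct and matches what the paper implicitly does: specialise Eq.~\eqref{Eq.epsdecDU} to $\mc{T}_{A\to B}=\tr_{A_2}$ (so $B\cong A_1$), compute the Choi state $\tau_{AB}=\ketbra{\Phi^+}{\Phi^+}_{A_1 B}\otimes I_{A_2}/d_{A_2}$, and lower-bound $H^\epsilon_{\rm min}(A|B)_\tau \ge H_{\rm min}(A|B)_\tau = \log d_A - 2\log d_{A_1}$ by testing the operator inequality against $\sigma_B=I_B/d_{A_1}$. That min-entropy computation is correct (and in fact tight), and it is the only nontrivial step; the paper states the corollary without a separate proof, so this is precisely the intended route.

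Two small points deserve to be flagged, both of which you noticed at least implicitly. First, with the hypothesis \eqref{Eq:23m'o2} as written (with a $+\log(1+8d_A^{2-\ell})$ on the right), the first summand comes out as $\epsilon(1+8d_A^{2-\ell})^{3/2}$, which is \emph{larger} than $\epsilon$; a term of the form $-\tfrac12\log(1+8d_A^{2-\ell})$ would be needed for exact cancellation with $\sqrt{1+8d_A^{2-\ell}}$, so the ``to leading order'' proviso is doing real work (namely, regarding $d_A^{2-\ell}$ as negligible). Second, the paper's own Eq.~\eqref{Eq.epsdecDU} carries a $+12\epsilon$ tail, so the natural bound from this route is $\epsilon + 12\epsilon = 13\epsilon$, not $9\epsilon$; a total of $9\epsilon$ would require a smoothing constant of $8\epsilon$ rather than $12\epsilon$ in the decoupling theorem. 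You correctly get $13\epsilon$ and honestly identify the discrepancy with the paper's stated $9\epsilon$ rather than manufacturing an argument to close it; this is a minor numerical inconsistency on the paper's side, not a gap in your proof.
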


Using this, we can express the rates for coherent state merging using random diagonal unitaries.
\begin{theorem}[Achievable rates for coherent state merging with $\mathbf{D[\ell]}$]
Fix $0 < \epsilon\leq 1$. Given a tripartite pure state $\Psi^{ABR}$, the quantum communication cost $q_\epsilon^{(1)}$ and the entanglement gain $e_\epsilon^{(1)}$ of any $\epsilon$-error one-shot coherent state merging protocol with $D[\ell]$ satisfies the following bounds:
\begin{align}
e_\epsilon^{(1)} &\geq \frac{1}{2}[H_{\rm min}^{\delta}(A|R)_\Psi +H^\delta_0(A)_\Psi]
+ \log(\delta') + \log(1+8 d_A^{2-\ell}) \\
q_\epsilon^{(1)} &\leq \frac{1}{2}[-H_{\rm min}^{\delta}(A|R)_\Psi +H^\delta_0(A)_\Psi]
- \log(\delta') - \log(1+8 d_A^{2-\ell})
\end{align}
for $\delta >0$ such that $\epsilon = 2\sqrt{9\delta'} + 2\sqrt{\delta}$, and $\delta' = \delta + \sqrt{4\sqrt{\delta}-4\delta}$.
\end{theorem}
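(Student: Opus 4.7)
The plan is to follow the standard one-shot coherent state merging argument of Dupuis and Hayden~\cite{DH2011}, replacing their Haar-random decoupling lemma with our Corollary~\ref{Thm:decPT2}. The only structural change is that the extra factor $\sqrt{1+8d_A^{2-\ell}}$ multiplying the exponentially small decoupling error in Theorem~\ref{Thm:decDU} propagates directly into a $\log(1+8d_A^{2-\ell})$ correction in the rate bounds; every other ingredient of the argument remains untouched.

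Concretely, I would split Alice's system as $A = A_1 A_2$, with $\log d_{A_1}$ playing the role of the entanglement gain $e$ and $\log d_{A_2}$ the communication cost $q$. The encoding consists of applying a particular unitary $U_A \in \mc{D}[\ell]$ to Alice's $A$ and then sending $A_2$ to Bob via a noiseless quantum channel. To identify such a $U_A$, first smooth the input state $\Psi^{ABR}$ at purified distance $\delta$ to obtain $\tilde\Psi^{ABR}$ (which brings $\log d_A$ down to $H_0^\delta(A)_\Psi$ by restricting to the essential support of $\Psi^A$), and then apply Corollary~\ref{Thm:decPT2}: provided $\log d_{A_1}$ saturates the stated condition with internal parameter $\delta'$, there exists $U_A \in \mc{D}[\ell]$ satisfying
\begin{equation*}
|\!|\tr_{A_2}(U_A \tilde\Psi^{AR} U_A^\dagger) - \tau_{A_1} \otimes \tilde\Psi^R|\!|_1 \leq 9\delta'.
\end{equation*}

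The decoding step then invokes Uhlmann's theorem. Since $\tilde\Psi^{ABR}$ is (approximately) pure, $(U_A \otimes I_{BR})\ket{\tilde\Psi^{ABR}}$ is a purification of $\tr_{A_2}(U_A \tilde\Psi^{AR} U_A^\dagger)$, while $\ket{\Phi^{A_1 B_1}} \otimes \ket{\tilde\Psi^{\tilde B B R}}$ is a purification of $\tau_{A_1} \otimes \tilde\Psi^R$. The trace-distance closeness of their $A_1 R$ marginals lifts, via Uhlmann's theorem, to the existence of an isometry $V^{A_2 B \to B_1 \tilde B B}$ acting entirely on Bob's side which maps one purification to within purified distance $\sqrt{9\delta'}$ of the other; this $V$ serves as the decoder. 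Chaining this with the initial smoothing via the triangle inequality for purified distance and converting back to trace distance produces the overall merging error $\epsilon = 2\sqrt{9\delta'} + 2\sqrt{\delta}$ stated in the theorem. Substituting the saturating value of $\log d_{A_1}$ then yields the claimed bound on $e_\epsilon^{(1)}$, and the identity $q_\epsilon^{(1)} = H_0^\delta(A)_\Psi - e_\epsilon^{(1)}$ immediately gives the bound on the communication cost.

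Conceptually the argument introduces no new ingredient beyond Corollary~\ref{Thm:decPT2}, so the main task is simply bookkeeping of error parameters. The subtlety to watch for is that the decoupling constant in Corollary~\ref{Thm:decPT2} is $9\epsilon$ rather than the $5\epsilon$ of the analogous Haar-random decoupling used in Ref.~\cite{DH2011}; this accounts precisely for the $2\sqrt{9\delta'}$ versus $2\sqrt{5\delta'}$ discrepancy between our merging error and that of the original Dupuis--Hayden theorem, while the $\log(1+8d_A^{2-\ell})$ correction encodes the weaker decoupling performance of $D[\ell]$ compared to a Haar random unitary.
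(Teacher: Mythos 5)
Your proposal is correct and matches the paper's (largely implicit) argument: the paper does not spell out a proof here but simply states the theorem after Corollary~\ref{Thm:decPT2}, deferring to the one-shot coherent-state-merging machinery of Dupuis and Hayden, and your outline fills in exactly those steps (split $A=A_1A_2$, smooth at purified distance $\delta$, decouple $A_1R$ with Corollary~\ref{Thm:decPT2} at level $9\delta'$, apply Uhlmann's theorem to obtain Bob's decoder, and chain the errors to get $\epsilon = 2\sqrt{9\delta'}+2\sqrt{\delta}$). You also correctly identify the two points at which the new decoupling bound enters — the constant $9$ in place of $5$ and the $\log(1+8d_A^{2-\ell})$ shift in the rate — which is precisely the bookkeeping the paper relies on.

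One small caution for the write-up: since the prefactor in Theorem~\ref{Thm:decDU} is $\sqrt{1+8d_A^{2-\ell}}$, the natural correction to the constraint on $\log d_{A_1}$ after taking logarithms is $-\tfrac{1}{2}\log(1+8d_A^{2-\ell})$, i.e.\ it should tighten the constraint and hence \emph{reduce} the guaranteed entanglement gain and \emph{increase} the communication cost relative to the Haar case. The paper's Corollary~\ref{Thm:decPT2} and the theorem statement carry the correction with the opposite sign (and without the factor $\tfrac12$), and your proposal inherits that convention from the paper; for $\ell\ge 2$ the term is an $O(1)$ constant so the achievability conclusion is unaffected, but if you carry out the substitution explicitly you will find the sign flips, and it is worth flagging this when presenting the argument.
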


We can conclude that coherent state merging can be achieved with essentially optimal rates for $\ell\geq 2$ when using our decoupling technique and therefore with an efficient implementation. In the line of Ref.~\cite{HM14} this results further improves the decoding complexity by showing that even simpler implementations then those using unitary $2$-designs can achieve the given rates. 

\subsection{Relative thermalisation} \label{Appp:RThm}
We here discuss implications of our results on relative thermalisation, which is one of the fundamental questions in thermodynamics.

Formally, relative thermalisation is defined as follows.
\begin{definition}[Ref.~\cite{dRHRW2014}]
Let $S,E,R$ be quantum systems and $\Xi \subseteq S\otimes E$ be a subspace with a physical constraint such as total energy. The global system is in a state $\rho_{\Xi R} \in \mc{S}(\Xi \otimes R)$. $S$ is called \emph{$\delta$-thermalised relative to $R$} if
\begin{equation}
|\!| \rho_{SR} - \pi_S\otimes\rho_R |\!|_1 \leq\delta ,
\end{equation}
where $\rho_R$ is arbitrary and $\pi_S$ is a local microcanonical state, defined as $\pi_S := \tr_E \pi_\Xi$, where $\pi_\Xi := \frac{I_\Xi}{|\Xi |}$ is the fully mixed state on $\Xi$.
\end{definition}
Here, the $S$, $E$, and $R$ should be identified with the system, an environment, and a reference, respectively. With this identification, relative thermalisation requires that, after the environment is traced out, the system should be close to the state $\pi_S$ and should not have strong correlations with the reference.
When the physical constraint defining the subspace $\Xi$ is given by total energy, $\pi_S$ is simply a thermal Gibbs state on $S$. 
%Hence, the relative thermalisation means that $S$ is not only thermalised but also has no correlation with $R$. 
Thus, relative thermalisation is considered to be a mathematical definition of one of the assumptions in standard thermodynamics that \emph{the system should lose all correlations with other systems once it becomes a thermal Gibbs state}.
% In the special case of $d_R =1$, this reduces to the well-known canonical typicality \cite{GLTZ2006,  PSW2006, R2008}. 

It is important to clarify when and how relative thermalisation occurs in physically feasible situations. One idea is, as proposed in Ref.~\cite{dRHRW2014}, to apply decoupling theorem
by assuming that Haar random unitaries are acting on $\Xi$. It is however not clear whether Haar distributed unitaries have any physical interpretation.  
Since the random unitary $D[\ell]$ may typically appear as a natural Hamiltonian dynamics in certain types of quantum chaotic systems (see Ref.~\cite{NHMW2015-1, NHKW2016}), our decoupling result provides a relative thermalisation in physically feasible systems. 

We first rewrite Theorem~\ref{Thm:CoMtorus} in the following form. 

\begin{corollary}[Probabilistic decoupling theorem with $\mathbf{D[\ell]}$] \label{Thm:CoM}
Let $\epsilon, \Delta, \eta >0$. If the entropic relation 
\begin{equation}
H^{\epsilon}_{\rm min}(A|E)_{\rho}+H^{\epsilon}_{\rm min}(A|B)_{\tau} \geq 2 \log{\frac{2 \sqrt{1+8 d_A^{2-\ell}}}{\Delta - 12\epsilon}}
\end{equation}
holds, then the fraction of unitaries $U_A \in \mc{D}[\ell]$ such that 
\begin{equation}
|\!| \mc{T}_{A \rightarrow B} (U_A \rho_{AE} U_A^{\dagger}) - \tau_B \otimes  \rho_E  |\!|_1 
\geq \Delta + \eta
\end{equation}
is at most $2 e^{-\chi_{\ell} d_A \eta^4}$ with respect to the measure ${\sf D}[\ell]$, where $\chi_{\ell}^{-1} = 2^{11} (2 \ell +1)^3  K^4 \pi^2$, and $K=d_A |\!| \rho_A |\!|_{\infty}$.
\end{corollary}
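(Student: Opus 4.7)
The plan is to derive Corollary~\ref{Thm:CoM} as essentially an algebraic rearrangement of Theorem~\ref{Thm:CoMtorus}, treating the latter as the genuine probabilistic content and the former as a convenient restatement in terms of an entropic hypothesis. Theorem~\ref{Thm:CoMtorus} gives a concentration bound with
\begin{equation}
\Delta_{D[\ell]} = \sqrt{1+8 d_A^{2-\ell}}\, 2^{-\frac{1}{2}(H^{\epsilon}_{\rm min}(A|R)_{\rho} + H^{\epsilon}_{\rm min}(A|B)_{\tau})} + 12\epsilon,
\end{equation}
stating that $|\!|\cdot|\!|_1$ exceeds $2\Delta_{D[\ell]} + \eta$ only on a set of unitaries of ${\sf D}[\ell]$-measure at most $2 e^{-\chi_{\ell} d_A \eta^4}$. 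The corollary inverts the roles: it fixes a target error $\Delta$ at the outset and converts the decay condition into a sufficient lower bound on the sum of smooth min-entropies.

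First, I would manipulate the hypothesized entropic inequality into an explicit upper bound on $\Delta_{D[\ell]}$. Exponentiating both sides of
\begin{equation}
H^{\epsilon}_{\rm min}(A|E)_{\rho} + H^{\epsilon}_{\rm min}(A|B)_{\tau} \geq 2 \log \frac{2\sqrt{1+8 d_A^{2-\ell}}}{\Delta - 12\epsilon}
\end{equation}
yields $\sqrt{1+8 d_A^{2-\ell}}\, 2^{-\frac{1}{2}(H^{\epsilon}_{\rm min}(A|E) + H^{\epsilon}_{\rm min}(A|B))} \leq (\Delta - 12\epsilon)/2$, which, reinserting the additive smoothing term, translates into the bound $2\Delta_{D[\ell]} \leq \Delta$ up to the standard handling of the $12\epsilon$ offset.

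Second, I would invoke monotonicity of probability under set inclusion. Because the bad set $\{U_A \in \mc{D}[\ell] : |\!|\mc{T}_{A \rightarrow B}(U_A \rho_{AE} U_A^{\dagger}) - \tau_B \otimes \rho_E|\!|_1 \geq \Delta + \eta\}$ is contained in the larger set $\{U_A : |\!|\cdot|\!|_1 \geq 2\Delta_{D[\ell]} + \eta\}$ as soon as $2\Delta_{D[\ell]} \leq \Delta$, its ${\sf D}[\ell]$-measure is bounded by the right-hand side $2e^{-\chi_{\ell} d_A \eta^4}$ supplied by Theorem~\ref{Thm:CoMtorus}, which is exactly the conclusion of the corollary.

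The main obstacle, modest though it is, is the bookkeeping of the additive $12\epsilon$ contribution from the smooth-min-entropy formulation, so that the inequality $2\Delta_{D[\ell]} \leq \Delta$ emerges cleanly from the hypothesis as stated, perhaps via a mild redefinition of $\epsilon$ (or by allowing the constant in front of $\epsilon$ to absorb a factor of two). Beyond this, no further probabilistic or operator-theoretic ingredients are required, since Theorem~\ref{Thm:CoMtorus} has already absorbed both the torus concentration inequality of Lemma~\ref{Lemma:CoM} and the Lipschitz estimate relating the normalized $\ell^1$-metric on $\mc{D}[\ell] \cong \mathbb{T}^{(2\ell+1)d_A}$ to the decoupling error.
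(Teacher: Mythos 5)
Your derivation is correct and is the only natural route: the paper gives no explicit proof of Corollary~\ref{Thm:CoM}, which is presented as a direct rearrangement of Theorem~\ref{Thm:CoMtorus}, and your proposal reconstructs exactly that rearrangement (exponentiate the hypothesis, bound $\Delta_{D[\ell]}$, then apply monotonicity of measure under set inclusion against the event in Theorem~\ref{Thm:CoMtorus}).

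You are also right to flag the bookkeeping of the $12\epsilon$ term as the one non-trivial step. Carrying it through explicitly: the hypothesis gives
\begin{equation}
\sqrt{1+8 d_A^{2-\ell}}\, 2^{-\frac{1}{2}(H^{\epsilon}_{\rm min}(A|E)_{\rho}+H^{\epsilon}_{\rm min}(A|B)_{\tau})} \le \frac{\Delta-12\epsilon}{2},
\end{equation}
hence $\Delta_{D[\ell]} \le \frac{\Delta-12\epsilon}{2}+12\epsilon = \frac{\Delta+12\epsilon}{2}$, i.e.\ $2\Delta_{D[\ell]}\le \Delta+12\epsilon$ rather than $\le \Delta$. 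So the bad set $\{U_A : \|\cdot\|_1\ge \Delta+\eta\}$ is only guaranteed to lie inside $\{U_A:\|\cdot\|_1\ge 2\Delta_{D[\ell]}+\eta\}$ after replacing $12\epsilon$ by $24\epsilon$ in the denominator of the entropic condition (consistent, incidentally, with the $\Delta-24\epsilon_1$ appearing in Theorem~\ref{Thm:RelThe}), or equivalently by relaxing the threshold to $\Delta+12\epsilon+\eta$. Modulo this constant-factor correction, which you had already anticipated, the argument goes through as you describe.
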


Using Corollary~\ref{Thm:CoM} and setting $A= \Xi$, $B=S$, and $\mc{T}_{A \rightarrow B} = \tr_E$, we can calculate the fraction of relative thermalisation with random diagonal unitaries.
Following the approach in Ref.~\cite{dRHRW2014}, we obtain the following theorem.

\begin{theorem}[Relative thermalisation using $\mathbf{D[\ell]}$] \label{Thm:RelThe}
Let $\rho_{\Xi R}\in \mathcal S(\Xi \otimes R)$, with $\Xi\subseteq S\otimes E$. Let $\epsilon_2, \epsilon_3, \Delta >0$ and $\epsilon_1 > \epsilon_2 + \epsilon_3$. If the entropic relation
\begin{equation}
H^{\epsilon_1}_{\rm min}(SE|R)_{\rho}+H^{\epsilon_2}_{\rm min}(E)_{\pi} - H^{\epsilon_3}_{\rm max}(S)_{\pi}\geq 2 \log{\frac{2 \sqrt{1+8 d_S^{2-\ell}}}{(1-\sqrt{1-(\epsilon_1 - \epsilon_2 - \epsilon_3)^2})(\Delta - 24\epsilon_1)}}
\end{equation}
holds, then, after a unitary evolution $U \in \mc{D}[\ell]$ acting on $\Xi$, $S$ will be $\delta$-thermalised relative to $R$, except for a fraction of at most $2 e^{-\chi_{\ell} d_S \delta^4}$ of the unitaries with respect to ${\sf D[\ell]}$, where $\chi_{\ell}^{-1} = 2^{11} (2 \ell +1)^3  K^4 \pi^2$, and $K=d_S |\!| \rho_S |\!|_{\infty}$.
\end{theorem}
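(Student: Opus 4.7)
The plan is to apply the probabilistic decoupling statement of Corollary~\ref{Thm:CoM} to the relative-thermalisation setting, with the identifications $A=\Xi$ (the subspace on which the unitary acts), $B=S$ (the system to be thermalised), reference $R$, initial state $\rho_{\Xi R}\in\mc{S}(\Xi\otimes R)$, and CPTP map $\mc{T}_{\Xi\to S}=\tr_E$. First I would check that the decoupling target coincides with the thermalisation target: by the definition of the Choi--Jamio\l{}kowski isomorphism, $\tau_B=\tr_A J(\tr_E)=\tr_E\pi_\Xi=\pi_S$, so Corollary~\ref{Thm:CoM} directly bounds $\|\tr_E(U\rho_{\Xi R}U^\dagger)-\pi_S\otimes\rho_R\|_1$, exactly the quantity whose smallness defines relative $\delta$-thermalisation. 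Taking the slack $\eta$ of the corollary proportional to $\delta$ produces the concentration factor $2 e^{-\chi_\ell d_S\delta^4}$ in the statement.

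Because $\rho$ is supported on $\Xi\subseteq S\otimes E$, the first input entropy of Corollary~\ref{Thm:CoM} equals $H_{\min}^{\epsilon_1}(SE|R)_\rho$, matching the first term in the present theorem. The essential technical work is a lower bound on the second input entropy $H_{\min}^{\epsilon_1}(\Xi|S)_\tau$, which must be expressed in terms of the microcanonical entropies of $\pi$. Following the chain-rule argument of Ref.~\cite{dRHRW2014}, I would prove that for any $\epsilon_2,\epsilon_3>0$ with $\epsilon_1>\epsilon_2+\epsilon_3$,
\[
H_{\min}^{\epsilon_1}(\Xi|S)_\tau\ \geq\ H_{\min}^{\epsilon_2}(E)_\pi-H_{\max}^{\epsilon_3}(S)_\pi+\log\bigl(1-\sqrt{1-(\epsilon_1-\epsilon_2-\epsilon_3)^2}\bigr).
\]
Substituting this into the hypothesis of Corollary~\ref{Thm:CoM} and carefully collecting constants reproduces the entropic inequality stated in the theorem; the $\Delta-24\epsilon_1$ (rather than the $\Delta-12\epsilon_1$ of the corollary) accounts for an additional smoothing cost incurred when the smoothing budget on $\tau$ is passed through the chain rule.

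The hard part is the chain-rule bound on $H_{\min}^{\epsilon_1}(\Xi|S)_\tau$. The key idea is that the Choi state $\tau_{\Xi S}=J(\tr_E)$ is, via the Stinespring dilation of $\tr_E$ (which here is simply the inclusion $\Xi\hookrightarrow S\otimes E$), related to the microcanonical state $\pi_\Xi$ through a pure state on $\bar\Xi\otimes S\otimes E$ with marginals $\pi_S$ on $S$ and $\pi_E$ on $E$. One then smooths $\pi_E$ and $\pi_S$ independently within purified-distance balls of radii $\epsilon_2$ and $\epsilon_3$, recombines these smoothings on $\pi_\Xi$ inside a single purified-distance ball of radius $\epsilon_1$, and converts the result into a smoothing budget for $\tau_{\Xi S}$; the factor $1-\sqrt{1-(\epsilon_1-\epsilon_2-\epsilon_3)^2}$ arises at this recombination step from a Fuchs--van-de-Graaf-type inequality relating purified distances and fidelities. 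Once this entropic estimate is in hand, everything else is either a direct invocation of Corollary~\ref{Thm:CoM} or bookkeeping of constants.
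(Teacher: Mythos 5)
Your proposal follows exactly the route the paper itself indicates: apply Corollary~\ref{Thm:CoM} with the identifications $A=\Xi$, $B=S$, $\mc{T}_{A\to B}=\tr_E$ and reference $R$, observe that $\tau_B=\tr_\Xi J(\tr_E)=\pi_S$, and then convert $H^{\epsilon_1}_{\min}(\Xi|S)_\tau$ into $H^{\epsilon_2}_{\min}(E)_\pi-H^{\epsilon_3}_{\max}(S)_\pi$ via the smoothed chain-rule argument of Ref.~\cite{dRHRW2014} (the paper's ``proof'' consists precisely of this pointer). Your outline of the chain-rule lemma, the Fuchs--van de Graaf-type recombination producing the $1-\sqrt{1-(\epsilon_1-\epsilon_2-\epsilon_3)^2}$ factor, and the tracking of the extra smoothing cost that turns $12\epsilon$ into $24\epsilon_1$, all match the intended argument.

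One place where your bookkeeping does not quite close is the concentration factor. Applying Corollary~\ref{Thm:CoM} with $A=\Xi$ gives a failure fraction $2e^{-\chi_\ell d_\Xi\,\eta^4}$ with $K=d_\Xi\|\rho_\Xi\|_\infty$, and $\delta$-thermalisation corresponds to $\Delta+\eta\leq\delta$, so the corollary-level exponent involves $d_\Xi$ and $\eta$, not $d_S$ and $\delta$. Your ``taking $\eta$ proportional to $\delta$'' remark produces $\delta$ in the exponent only up to a constant redefinition of $\chi_\ell$, and it does not by itself account for the replacement $d_\Xi\to d_S$ and $\rho_\Xi\to\rho_S$ appearing in the stated theorem. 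The paper does not justify this substitution either; if you want a watertight derivation you should either carry $d_\Xi,\|\rho_\Xi\|_\infty$ through (which is what the corollary actually yields) or supply the additional argument that replaces them by the $S$-marginal quantities. Everything else in your plan is on target.
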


Theorem~\ref{Thm:RelThe} provides the conditions when relative thermalisation occurs, which is similar to Ref.~\cite{dRHRW2014}. 
The fraction is however dependent on the constant $K=d_S |\!| \rho_S |\!|_{\infty}$, which satisfies $1 \leq K \leq d_S$. When $K = \Omega(d_S^{1/4})$, the fraction bound in Theorem~\ref{Thm:RelThe} becomes rather trivial.
However, we are especially interested in the situation where initial state $\rho_{\Xi R}$ is strongly correlated between $S$ and $R$ because the main goal is to understand whether the correlations are destroyed during the thermalisation process.
In this case, $K$ is expected to be relatively small and hence, Theorem~\ref{Thm:RelThe} shall imply that almost any dynamics in sufficiently chaotic many-body systems, modelled by $D[\ell]$, makes the system $S$ thermalised relative to $R$.

\bibliographystyle{unsrt}

\bibliography{Bib}

\end{document}